\pdfoutput=1
\documentclass{article}
\usepackage[backend=biber,style=numeric,maxbibnames=9]{biblatex}
\usepackage[english]{babel}
\usepackage{palatino}
\bibliography{main}
% \bibliographystyle{alpha}

% Set page size and margins
% Replace `letterpaper' with `a4paper' for UK/EU standard size
\usepackage[letterpaper,top=2cm,bottom=2cm,left=3cm,right=3cm,marginparwidth=1.75cm]{geometry}

% Useful packages
\usepackage{amsmath, amssymb, amsthm, algorithm, pseudo}
\usepackage{algorithmicx}
\usepackage[noend]{algpseudocode}
\usepackage{graphicx}
\usepackage{caption}
\usepackage{subcaption}
\usepackage{fullpage}
\usepackage[colorlinks=true, allcolors=blue]{hyperref}

\newtheorem{theorem}{Theorem}
\newtheorem{lemma}{Lemma}
\newtheorem{definition}{Definition}
\newtheorem{corollary}{Corollary}

\newtheorem{claim}{Claim}
\newcommand*\samethanks[1][\value{footnote}]{\footnotemark[#1]}
\newcommand{\cev}[1]

\allowdisplaybreaks

\title{Warm-starting Push-Relabel}

\author{Sami Davies\thanks{Department of EECS and the Simons Institute for the Theory of Computing, UC Berkeley. \texttt{samidavies@berkeley.edu}.}, Sergei Vassilvitskii\thanks{Google Research -- New York. \texttt{sergeiv@google.com}, \texttt{wangyy@google.com}.}, Yuyan Wang\samethanks}

\begin{document}
\maketitle

\begin{abstract}
Push-Relabel is one of the most celebrated network flow algorithms. 
Maintaining a pre-flow that saturates a cut, it enjoys better theoretical and empirical running time than other flow algorithms, such as Ford-Fulkerson.
In practice, Push-Relabel is even faster than what theoretical guarantees can promise, 
in part because of the use of good heuristics for seeding and updating the iterative algorithm. However, it remains unclear how to run Push-Relabel on an arbitrary initialization that is not necessarily a pre-flow or cut-saturating.
We provide the first theoretical guarantees for warm-starting Push-Relabel with a predicted flow,
where our learning-augmented version benefits from fast running time when the predicted flow is close to an optimal flow,
while maintaining robust worst-case guarantees.
Interestingly, our algorithm uses the \emph{gap relabeling heuristic}, which has long been employed in practice, even though prior to our work there was no rigorous theoretical justification for why it can lead to run-time improvements. We then provide experiments that show our warm-started Push-Relabel also works well in practice. 
\end{abstract}

\section{Introduction}
Maximum flow is a fundamental problem in combinatorial optimization. 
It admits many algorithms, from the famous Ford-Fulkerson algorithm~\cite{ford1956maximal} which employs augmenting paths, to recent near-linear time scaling based approaches~\cite{chen2022maximum}. In practice, however, the {\em push-relabel} family of algorithms is the benchmark for fast implementations~\cite{williamson2019, chandran2009}. 

Designed by Goldberg and Tarjan \cite{goldberg-tarjan}, the core Push-Relabel algorithm (Algorithm \ref{alg:vanilla-PR}) has running time $O(n^2m)$, where $n$ and $m$ are the number of vertices and edges in the network.
There are practical variants that reduce the running time to $O(n^2 \sqrt{m})$, and more theoretical adaptations that lead to sub-cubic $O(n m \log (n^2/m)$ run-time.  
Given the popularity of max-flow as a subroutine in many large scale applications~\cite{boykov2004experimental, kolmogorov2007applications, williamson2019}, it is no surprise that improving running times has been a subject of a lot of study, with multiple heuristic methods being developed~\cite{ahuja1997, cherkassky1995, goldberg2008partial}.

To complement the heuristics, researchers recently started looking at max-flow algorithms in the \emph{algorithms with predictions} framework~\cite{mitzenmacher2022algorithms}, 
and have successfully shown that one can improve running times when problem instances are not worst case, but share some commonalities ~\cite{polak2022learning, davies23b}.
Two independent groups initiated the study and proved that the running time of the Edmonds-Karp selection rule for Ford Fulkerson can be improved from $O(m^2n)$ to $O(m ||f^* - \hat{f}||_1)$, where $f^*$ is an optimal flow on the network and $\hat{f}$ is a predicted flow~\cite{davies23b, polak2022learning}.
These algorithms begin by modifying a predicted flow to form a feasible flow ~\cite{davies23b} or assuming that the predicted flow is already feasible ~\cite{polak2022learning}, and then start the augmenting path algorithms from that point. It is then relatively straightforward to bound the number of augmentations by the $\ell_1$-distance between the predicted and maximum flows.

While these works have been shown to improve upon the cold-start, non learning-augmented versions, it is important to note that they have been improving upon \emph{sub-optimal} algorithms for max flow. In this work, we show how to warm-start Push-Relabel, whose cold-start version is nearly state-of-the-art for the maximum flow problem.
This directly addresses the challenge specified in ~\cite{davies23b} on bringing a rigorous analysis for warm-starting non-augmenting path style algorithms.  In the process of doing so, we provide a theoretical explanation for the success of the popular \emph{gap relabeling heuristic} in improving the running time of Push-Relabel algorithms. 
Specifically, both the gap relabeling heuristic and our algorithm maintain a  cut with \emph{monotonically decreasing t-side nodes} (see Section \ref{sec:tech} for more), which directly leads to improved running times for our version of Push-Relabel and the gap relabeling heuristic. 
Lastly, we show that our theory is predictive of what happens in practice with experiments on the image segmentation problem.

\subsection{Preliminaries}
\textbf{Graph, flow and cut concepts.} Our input is a network $G=(V,E)$,
where each directed edge $e \in E$ is equipped with an integral capacity $c_e \in \mathbb{Z}_{\geq 0}$. 
Let $|V|=n$ and $|E|=m$.
$G$ contains nodes $s$, the source, and $t$, the sink.  $G$ is connected: $\forall u \in V$, there are both $s-u$ and $u-t$ paths in $G$.
A flow $f\in \mathbb{Z}_{\geq 0}^m$ is feasible if it satisfies: (1) \emph{flow conservation}, 
meaning any $u \in V \setminus \{s, t \}$ satisfy 
$\sum_{(v,u) \in E}f_e = \sum_{(u,w) \in E}f_e$; 
(2) \emph{capacity constraints}, meaning for all $e \in E$, $f_e \leq c_e$. Our goal is to find the maximum flow, i.e. one with the largest amount of flow leaving $s$. 

We call $f$ a \emph{pseudo-flow} if it satisfies capacity constraints only. A node $u \in V \setminus \{s,t\}$ is said to have \emph{excess} if it has more incoming flow than outgoing, i.e., $\sum_{(v,u) \in E}f_e > \sum_{(u,w) \in E}f_e$; analogously it has \emph{deficit} if its outgoing flow is more than ingoing. 
We denote the excess and deficit of a node $u$ with respect to $f$ as 
$\textsf{exc}_f(u) = \max\{\sum_{(v,u) \in E}f_e - \sum_{(u,w) \in E}f_e,0\}$ and $\textsf{def}_f(u)=\max\{\sum_{(u,w) \in E}f_e-\sum_{(v,u) \in E}f_e ,0\}$, where at most one can be positive. A pseudo-flow can have both excesses and deficits, whereas a \emph{pre-flow} is a pseudo-flow with excess only.

For a pseudo-flow $f$,
the \emph{residual graph} $G_f$ is a network on $V$;
for every $e=(u,v) \in E$, $G_f$ has edge $e$ with capacity $c'_e=c_e-f_e$ 
and a backwards edge $(v,u)$ with capacity $f_e$.
Let $E(G_f)$ denote the edges in $G_f$.
The value of a pseudo-flow $f$ is $\textsf{val}(f) = \sum_{e= (s,u)}f_e$, the total flow going out of $s$. Notice that this is not necessarily equivalent to the total flow into $t$ since flow conservation is not satisfied.
A \emph{cut-saturating} pseudo-flow is one that saturates some $s-t$ cut in the network. Push-Relabel maintains a cut-saturating pre-flow; equivalently, there is no $s-t$ path in the residual graph of the pre-flow. We use $\delta(S, T)$ to denote an $s-t$ cut between two sets $S$ and $T$.
Note that the cut induced by any cut-saturating pseudo-flow $f$ can be found
by taking $T = \{u\in V: \exists  u-t \text{ path in }G_f\}$ (including $t$) and $S = V \setminus T$.

\textbf{Prediction.}
The prediction that we will use to seed Push-Relabel is some $\hat{f}\in \mathbb{Z}^m_{\geq 0}$, which is a set of integral values on each edge. 
Observe that one can always cap the prediction by the capacity on every edge to maintain capacity constraints, so throughout this paper we will assume $\hat{f}$ is a pseudo-flow.
It is important to note that our predicted flow is \emph{not} necessarily feasible or cut-saturating, and part of the technical challenge is making use of a good predicted flow despite its infeasibility. 

\textbf{Error metric.} We measure the error of a predicted pseudo-flow $\hat{f}$ on $G$. The smaller the error is, the higher quality the prediction is, and the less time Push-Relabel seeded with $\hat{f}$ should take. A pseudo-flow becomes a maximum flow when it is both feasible and cut-saturating. Hence, the error measures how far $\hat{f}$ is from being cut-saturating while being feasible.
We say that a pseudo-flow $\hat{f}$ is $\sigma$ \emph{far from being cut-saturating} if there exists a feasible flow $f'$ on $G_{\hat{f}}$ where \textsf{val}$(f') \leq \sigma$ and $\hat{f}+f'$ is cut-saturating on $G$ (though the cut does not have to be a min-cut). To measure how far $\hat{f}$ is from being feasible, we sum up the total excesses and deficits. 
In total we use the following error metric:

\begin{definition}\label{def:error}
For pseudo-flow $\widehat{f}$ on network $G$,
the \emph{error} of $\widehat{f}$ is the smallest integer $\eta$
such that (1) $\widehat{f}$ is $\eta$ far from being cut-saturating
and (2) the total excess and deficit in $G$ with respect to $\widehat{f}$  is $\sum_{u \in V \setminus \{s,t\}} \textsf{exc}_{\widehat{f}}(u)+\textsf{def}_{\widehat{f}}(u) \leq \eta$.
\end{definition}

If $\eta=0$, $\hat{f}$ is the max-flow and the cut that is saturated is the min-cut.
The previously studied error metric for predicted flows, such as by   \cite{davies23b} and  \cite{ polak2022learning}, was $||f^*-\hat{f}||_1$, for any max-flow $f^*$.

PAC-learnability is the standard to justify that the choice of prediction and error metric are reasonable. Flows are
PAC-learnable with respect to the $\ell_1$-norm \cite{davies23b}.
Our results hold replacing our error metric with the $\ell_1$-norm because 
our metric provides a more fine-grained guarantee than the $\ell_1$-norm (i.e., if a prediction $\hat{f}$ has error $\eta$, then  $\eta \leq ||f^*-\hat{f}||_1$). Thus we can omit any  theoretical discussion of learnability.
We present this work with respect to our error metric as we find the $\ell_1$ error metric to be unintuitive, in the sense that it is not really descriptive of how good a predicted flow is.

\paragraph{Push-Relabel.} Here, we review the ``vanilla" form of Push-Relabel. The Push-Relabel algorithm maintains a pre-flow and set of valid \emph{heights} (also called labels).
Heights $h : V \rightarrow \mathbb{Z}_{\geq 0}$
are \emph{valid} for a pre-flow $f$ if for every edge in the residual network $(u,v) \in E(G_f)$, $h(u) \leq h(v)+1$, and if $h(s)=n$ and $h(t)=0$. 
An edge $(u,v) \in E(G_f)$  is called \emph{admissible} if $h(u) = h(v)+1$ and $c'_{(u,v)} > 0$, which means we can push flow from $u$ to $v$. 
The formal Push-Relabel algorithm is in Algorithm \ref{alg:vanilla-PR}, seeded with
$f^{\textsf{init}}$ where $f^{\textsf{init}}_e=c_e$ for all $e = (s,u)$ and otherwise $f^{\textsf{init}}_e=0$.

It is known from the original analysis that all heights in Push-Relabel are bounded by $2n$. 
\begin{lemma}\label{lem: 2n-heights}
For a pre-flow $f$ on network $G$,
every node $u$ with $\textsf{exc}_f(u) >0$ has a path in $G_f$ to $s$. 
Further, for $d(u,v)$ the length of the shortest path between $u$ to $v$ in $G_f$, any valid heights in Push-Relabel (Algorithm \ref{alg:vanilla-PR}) satisfy
$$h(u) \leq h(v) + d(u,v)$$
Choosing $v=s$, we have $h(u) \leq h(s) + n = 2n$.
\end{lemma}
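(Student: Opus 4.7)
The plan is to prove the two claims in order. For the first part, let $S \subseteq V$ denote the set of vertices reachable from $u$ in $G_f$, so $u \in S$, and suppose toward a contradiction that $s \notin S$. By maximality, no edge of $G_f$ crosses from $S$ to $V \setminus S$. Translating back to the original network, I get two symmetric observations: any $(a,b) \in E$ with $a \in S,\, b \notin S$ must be saturated, $f_{(a,b)} = c_{(a,b)}$, since otherwise the forward residual edge would extend $S$; and any $(a,b) \in E$ with $a \notin S,\, b \in S$ must satisfy $f_{(a,b)} = 0$, since otherwise the backward residual edge $(b,a)$ would extend $S$.

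I would then compute the net flow entering $S$ in two ways. On one hand, the two observations give $\sum_{a \notin S,\, b \in S} f_{(a,b)} - \sum_{a \in S,\, b \notin S} f_{(a,b)} = 0 - \sum_{a \in S,\, b \notin S} c_{(a,b)} \leq 0$. On the other hand, by cancelling edges internal to $S$, this same quantity equals $\sum_{v \in S}(\text{inflow}_f(v) - \text{outflow}_f(v))$. Since $s \notin S$, every $v \in S$ is a non-source node, so the pre-flow property gives $\text{inflow}_f(v) - \text{outflow}_f(v) \geq 0$, and the $u$ term is $\textsf{exc}_f(u) > 0$ strictly. This forces $0 < \sum_{v \in S}(\text{inflow}_f(v) - \text{outflow}_f(v)) \leq 0$, a contradiction, so $s \in S$.

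For the height inequality, take a shortest residual path $u = v_0, v_1, \ldots, v_k = v$ of length $k = d(u,v)$ in $G_f$. Validity gives $h(v_i) \leq h(v_{i+1}) + 1$ for every edge, and telescoping over the $k$ edges yields $h(u) \leq h(v) + k = h(v) + d(u,v)$. Specializing $v = s$, for any $u$ with $\textsf{exc}_f(u) > 0$ the first part supplies a $u$-to-$s$ path in $G_f$ of length at most $n-1$, so $h(u) \leq h(s) + (n-1) \leq 2n$.

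The main obstacle is setting up the first part correctly: one must pick $S$ as the set reachable \emph{from} $u$ in $G_f$ (not the set that reaches $u$) and symmetrically interpret the absence of residual crossings as saturation on outgoing original edges together with zero flow on incoming ones. Once this is in place, the net-inflow identity pits a nonpositive capacity-based expression against a strictly positive excess-based one, which is the punch. The second part is just telescoping on a shortest path, and the $2n$ bound falls out by specializing to $v=s$.
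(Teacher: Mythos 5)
Your proof is correct. The paper does not actually prove this lemma---it states it as known from the original Goldberg--Tarjan analysis---and your argument (the reachable-set flow-conservation contradiction for the path to $s$, followed by telescoping the validity inequality $h(v_i)\leq h(v_{i+1})+1$ along a shortest residual path) is exactly that standard proof; the only point worth making explicit is that the step ``every $v \in S$ is a non-source node, so inflow minus outflow is nonnegative'' applies the pre-flow condition to $t$ as well, which is the usual convention and holds for every pre-flow arising in Algorithm~\ref{alg:vanilla-PR} since flow is never pushed out of $t$.
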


\iffalse
We show that when the given $G$ is a connected graph everywhere, given any pseudo-flow $f$ it must have either a path to $s$, or to $t$, or both in the residual graph.
\begin{lemma}
\label{lem:connection}
Given any pseudo-flow $f$, all nodes with no path to $s$ or to $t$ form a subgraph that is originally disconnected from $s$ in $G$.
\end{lemma}
\begin{proof}
Say there is a inclusion-wise maximal set $S$ of such nodes. We prove that there is no edge $(p,q)$ in the residual graph of $f$ such that $p \in S$ and $q \notin S$. Say there exists such an edge. If $q$ has a path to $s$ or to $t$, then $p$ also have a path to $\{s,t\}$, contradicting definition of $S$; on the other hand if $q$ doesn't have a path to $s$ or $t$, $q$ should be in $S$, contradicting the maximality of $S$.

Hence there cannot be any flow into $S$, otherwise some reverse edge $(p,q)$ from $S$ to $V \setminus S$ would be in the residual graph. We then show in graph $G$ there is no edge from $V \setminus S$ to $S$ at all. If such an edge exists, it would be in the residual graph since there is no in-flow. 

Therefore, $S$ is a set of nodes with no path to $t$ in $G$ right from the start.
\end{proof}
\fi

\iftrue
\begin{algorithm}[t!] 
\caption{Push-Relabel}\label{alg:vanilla-PR}
\begin{algorithmic}
\State {\bfseries Input}: Network $G$
\State  Define $f_{e}=c_{e}$ for $e = (s,u)$ and $f_e=0$ for all other $e$
\State  Define $h(u)=0$ for all $u \in V \setminus \{s\}$ and $h(s)=n$
\State  Build residual network $G_f$
\While{ $\exists$ node $u$ with $\textsf{exc}_f(u)>0$}
\If{ $\exists$ admissible $(u,v) \in E(G_f)$ with $f_{(u,v)} < c_{(u,v)}$}
% \STATE Send flow $\min\{c_{(u,v)}-f_{(u,v)},\textsf{exc}_f(u)\}$ on $(u,v)$
\State   Update $f$ by sending an additional flow value of $\min \{\textsf{exc}_f(u), c'_{(u,v)}\}$ along $(u,v)$
\State  Update $G_f$
\Else
\State  Update $h(u) = 1 + \min_{v : (u,v) \in E(G_f)}h(v)$
\EndIf
\EndWhile
\State  {\bfseries Output}: $f$
\end{algorithmic}
\end{algorithm}
\fi
At any point of the algorithm, 
the $s-t$ cut maintained by the pre-flow can be induced using the heights.
\begin{lemma}\label{lem:induced-cut}
    For $h$ valid heights for a cut-saturating pseudo-flow $f$ on network $G$,
    let $\theta$ be the smallest positive integer such that $\theta \notin \{h(u)\}_{u \in V}$. 
    Then $S=\{u \in V: h(u) > \theta\}$ and $T=\{u \in V: h(u) < \theta\}$ form a cut saturated by $f$.
\end{lemma}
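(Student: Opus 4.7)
The plan is to argue in two stages: first verify that $S, T$ actually constitute a valid $s$-$t$ cut (i.e.\ that $s \in S$, $t \in T$, and $S \cup T = V$), and then show that no residual edge crosses from $S$ to $T$, which by the characterization in the preliminaries is equivalent to saying $f$ saturates $\delta(S,T)$.

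For the first stage, the key observation is that $\theta$ is well-defined and satisfies $\theta \le n-1$. Since there are $n$ vertices, the set $\{h(u) : u \in V\}$ contains at most $n$ distinct nonnegative integers, so a smallest missing positive integer exists. For the bound on $\theta$, I would use a pigeonhole count: among the positive integers $\{1, 2, \ldots, n-1\}$, the only vertices that could realize these values are $V \setminus \{s, t\}$ (since $h(s) = n$ and $h(t) = 0$), and $|V \setminus \{s, t\}| = n-2$, so at least one value in $\{1, \ldots, n-1\}$ is missed. Hence $\theta \le n-1 < n = h(s)$, placing $s \in S$; and $h(t) = 0 < \theta$ places $t \in T$. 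By the very definition of $\theta$, no vertex has $h(u) = \theta$, so $V = S \cup T$ and $(S, T)$ is a bipartition.

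For the second stage, suppose toward contradiction there is an edge $(u, v) \in E(G_f)$ with $u \in S$ and $v \in T$. Then $h(u) \ge \theta + 1$ and $h(v) \le \theta - 1$, so $h(u) \ge h(v) + 2$, violating the validity condition $h(u) \le h(v) + 1$ required of the heights on residual edges. Thus no residual edge crosses from $S$ to $T$, meaning every original edge from $S$ to $T$ is saturated and every original edge from $T$ to $S$ carries zero flow, which is precisely saying $f$ saturates $\delta(S, T)$.

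I do not expect a serious obstacle here: the argument is essentially the standard ``gap'' observation behind the gap relabeling heuristic. The only mildly delicate point is the pigeonhole bound $\theta \le n-1$ (it is not enough to note $\theta \le n$, since we need strict inequality $\theta < h(s)$ to conclude $s \in S$), and this is handled by remembering that both $h(s) = n$ and $h(t) = 0$ are pre-committed, leaving only $n-2$ vertices to occupy the $n-1$ values in $\{1, \ldots, n-1\}$.
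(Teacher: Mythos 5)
Your proof is correct and follows essentially the same route as the paper's own argument (given in the paragraph immediately after the lemma): a counting argument to locate a missing height value strictly between $h(t)=0$ and $h(s)=n$, followed by the observation that a residual edge from $S$ to $T$ would force $h(u)\ge h(v)+2$, contradicting validity. Your version is, if anything, slightly more careful in invoking the validity condition directly (the paper phrases the contradiction in terms of admissibility) and in making the pigeonhole bound $\theta\le n-1$ explicit.
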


We will call this the cut \emph{induced by the heights}.
Indeed, such a threshold $\theta$ can be found because $\{h(u)\}_{u \in V}$ has at most $n$ different values, but $h(s) = n$ and $h(t) = 0$, so among the $n+1$ values $\{0, 1, \ldots, n\}$, there is at least one not in the set. It is easy to see $\delta(S,T)$ is a saturated cut. For any $u \in S, v \in T$, we have that $h(u) > h(v) + 1$, so $(u,v)$ is not admissible in the residual graph. It follows that either $(u,v) \in E$ and it is saturated, or $(v,u) \in E$ and $f_{(v,u)} = 0$.

A saturated cut be can defined from the set of vertices that can reach the sink in the residual graph. 
\begin{lemma}
\label{lem:S_T_cut}
For any pseudo-flow $f$ on network $G$, let $T$ be all nodes that can reach $t$ in $G_f$ and $S = V \setminus T$. If both $S,T$ are non-empty, then $\delta(S,T)$ is a saturated cut.
\end{lemma}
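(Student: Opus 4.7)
The plan is to argue directly from the definition of the residual graph. By construction every vertex in $T$ has a path to $t$ in $G_f$, while no vertex in $S$ does; the non-emptiness hypothesis ensures that $\delta(S,T)$ is a genuine partition to speak of.

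First I would handle the forward edges. Take any $(u,v)\in E$ with $u\in S$ and $v\in T$, and suppose for contradiction that $f_{(u,v)} < c_{(u,v)}$. Then the residual capacity $c'_{(u,v)} = c_{(u,v)} - f_{(u,v)}$ is strictly positive, so $(u,v)\in E(G_f)$. Concatenating this edge with the existing $v$-$t$ path in $G_f$ yields a $u$-$t$ path in $G_f$, contradicting $u\in S$. Hence every edge crossing from $S$ to $T$ must be saturated.

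Next I would handle the backward edges symmetrically. For $(v,u)\in E$ with $v\in T$ and $u\in S$, suppose $f_{(v,u)} > 0$. Then the reverse edge $(u,v)$ appears in $G_f$ with capacity $f_{(v,u)} > 0$, and again we can extend the $v$-$t$ path in $G_f$ to a $u$-$t$ path, contradicting $u\in S$. So every edge crossing from $T$ to $S$ carries zero flow.

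Combining the two directions shows that $\delta(S,T)$ is saturated: the net flow across the cut equals the sum of capacities of edges from $S$ to $T$. The argument is entirely mechanical, so there is no real obstacle; the only subtlety worth flagging is that ``saturated cut'' here refers to the partition-level saturation property rather than requiring $\delta(S,T)$ to be an $s$-$t$ cut, which would additionally need $s \in S$ (equivalently, $s$ having no path to $t$ in $G_f$).
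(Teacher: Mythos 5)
Your proof is correct and follows essentially the same argument as the paper's: both show that an unsaturated forward edge or a positive-flow backward edge across the cut would place $(u,v)$ in $G_f$ and let $u$ reach $t$, contradicting $u \in S$. The closing remark about $\delta(S,T)$ not automatically being an $s$-$t$ cut is a reasonable clarification but not a departure from the paper's reasoning.
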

\begin{proof}
Fix $u \in S$, $v \in T$. Since $v$ can reach $t$ and $u$ cannot,
any edge $(u,v)$ from $S$ to $T$ in $G$ must be saturated by $f$,
and any edge $(v, u)$ from $T$ to $S$ in $G$ must have no flow. 
This is because if either of these were not true, 
the edge $(u,v)$ in $G_f$ would have positive capacity, allowing $u$ to reach $t$.
Hence $\delta(S,T)$ is saturated by $f$.
% by definition of $T$ edge $(v, u)$ in the residual graph. Otherwise, since $v$ can reach $t$, $u$ can also reach $T$ and $u$ should be in $T$ instead of $S$. It follows that  
\end{proof}

Lemmas \ref{lem:induced-cut} and \ref{lem:S_T_cut} apply to all pseudo-flows, whereas vanilla Push-Relabel must take a pre-flow as input. Before this work, it was unclear how to seed Push-Relabel with anything other than $f^{\textsf{init}}$.

Push-Relabel can be implemented with the \emph{gap relabeling heuristic},
i.e., whenever there is some integer $0<\theta<n$ with no nodes at height $\theta$, 
then nodes with height between $\theta$ and $n$ have their height increased to $n$. 
See Algorithm \ref{alg:ws_pr} for the formal statement,
where the cold-start version of this algorithm is to take as
input the cut-saturating pre-flow $f^{\textsf{init}}$.

\subsection{Technical contribution}
\label{sec:tech}
We first review Push-Relabel with the gap relabeling heuristic when the algorithm is seeded with a prediction that is a cut-saturating pre-flow with error $\eta$. We show that this version of Push-Relabel finds an optimal solution in time $O(\eta \cdot n^2)$.
Recall that in this setting, Definition \ref{def:error} implies that $\eta$ is just the total excess. This running time also holds for cold-start versions of the algorithm when the max-flow/min-cut value is known to be bounded by $\eta$. 
This is (1) the first theoretical analysis of the gap relabeling heuristic, and, (2) the first result showing a running time bounded by the volume of the cut in Push-Relabel.  Unlike the Ford-Fulkerson algorithm, which admits a naive run-time bound of $O(\eta \cdot m)$ when the max-flow value is bounded by $\eta$, an analogous claim cannot be made easily for Push-Relabel.

Intuitively, Push-Relabel with the gap relabeling heuristic 
essentially maintains a cut whose $t$-side
is \emph{monotonically decreasing} (i.e., it moves nodes on the $t$-side of the cut to the $s$-side, but not the other way around), and resolves excess on the $t$-side by routing excess
flow to $t$, or updating the cut so the excess node is on the $s$-side of the new cut. 
In the latter case, the excess flow will be sent back to $s$ later. 
The same insight will be used in our general warm-started version of Push-Relabel that can be seeded with any pseudo-flow.

Our main result is the following theorem, 
which applies in the general setting where the prediction is any pseudo-flow, i.e., the prediction is not necessarily a pre-flow and is not necessarily cut-saturating.

\begin{theorem}\label{thm:main}
Given a predicted pseudo-flow $\widehat{f}$ with error $\eta$ on network $G$, 
there exists a warm-start version of Push-Relabel that obtains the minimum cut in time $O(\eta \cdot n^2)$.
\end{theorem}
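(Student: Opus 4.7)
I would prove Theorem~\ref{thm:main} by reducing to the case of gap-relabeling Push-Relabel on a cut-saturating pre-flow with total excess $O(\eta)$, for which Section~\ref{sec:tech} already asserts an $O(\eta n^2)$-time bound. The reduction is a preprocessing step that, given a predicted pseudo-flow $\widehat{f}$ with error $\eta$, produces such a pre-flow $\widetilde{f}$ in $O(\eta n^2)$ time.

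The preprocessing proceeds in two stages. In the first stage, I make $\widehat{f}$ cut-saturating by repeatedly finding $s$-$t$ augmenting paths in $G_{\widehat{f}}$ via BFS and pushing a unit of flow along each. By clause (1) of Definition~\ref{def:error}, at most $\eta$ such augmentations are needed before no residual $s$-$t$ path remains, for a cost of $O(\eta m)$. Since each augmentation preserves conservation at intermediate nodes, the resulting pseudo-flow $\widetilde{f}$ has the same excesses and deficits as $\widehat{f}$, with total still at most $\eta$. In the second stage, I eliminate the deficits of $\widetilde{f}$. Let $\delta(S,T)$ be the cut induced by $\widetilde{f}$ via Lemma~\ref{lem:S_T_cut}. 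For each deficit node $u$ with deficit $d_u$, I locate a path in $G_{\widetilde{f}}$ from $s$ or from an excess node to $u$ and push $d_u$ units along it. This cancels $u$'s deficit without creating new ones, since intermediate-node conservation is preserved. There are at most $\eta$ such routings at $O(m)$ each, so this stage also fits within $O(\eta m)$. After both stages, $\widetilde{f}$ is a cut-saturating pre-flow with total excess at most $\eta$.

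I would then initialize heights $h(u) = n$ for $u \in S$ and $h(u) = 0$ for $u \in T$. These are valid: cut-saturation precludes residual edges from $S$ to $T$ (so the only edges to check satisfy $h(u) \le h(v)+1$ trivially), and $s \in S$, $t \in T$ give $h(s)=n$, $h(t)=0$ as required. Running gap-relabeling Push-Relabel from this seeding then outputs the min-cut of $G$ in $O(\eta n^2)$ time by the cited analysis. The total running time is $O(\eta m) + O(\eta n^2) = O(\eta n^2)$.

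The main obstacle is showing that the deficit-elimination stage always succeeds---i.e., that for every deficit node $u$ there exists a residual path in $G_{\widetilde{f}}$ from $s$ or from some excess node to $u$. A priori, a deficit node might lie in a component of $G_{\widetilde{f}}$ isolated from $s$ with no accompanying excess. I would handle this by a structural lemma: letting $B_u$ denote the set of nodes that can reach $u$ in $G_{\widetilde{f}}$, no residual edges enter $B_u$ from outside, so by a flow-conservation argument on $B_u$ (combined with the cut-saturation of $\widetilde{f}$) the signed excess-minus-deficit of $B_u$ is constrained in a way that forces either $s$ or a strictly positive excess node to lie inside $B_u$. Alternatively, one can sidestep the issue by augmenting $G$ with saturated auxiliary edges from $s$ to each deficit node (total added capacity $\le \eta$) so that $\widetilde{f}$ is already a pre-flow on the augmented graph, and then translate the result back; the bookkeeping for the auxiliary edges is straightforward but must be done carefully to recover the correct min-cut of the original $G$.
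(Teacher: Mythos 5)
Your stage 1 (saturating a cut via at most $\eta$ augmentations in $G_{\widehat{f}}$) is sound and essentially matches the paper's Algorithm \ref{alg:pre-process}, but your stage 2 --- globally eliminating all deficits before handing a pre-flow to gap-relabeling Push-Relabel --- has a genuine gap, and it sits exactly where the paper's argument does its real work. First, the reachability claim is wrong as stated: Lemma \ref{lem:old_lemma} guarantees that a deficit node is reachable in the residual graph from an excess node \emph{or from $t$}, not from $s$ or an excess node; a deficit node on the $t$-side of the saturated cut cannot be reached from $s$ at all, precisely because the cut is saturated. Second, and more fatally, the repair paths you push along may cross the saturated cut (e.g., from an excess node in $T$ to a deficit node in $S$, or from $t$ to a deficit node in $S$). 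Pushing flow on a residual edge from $T$ to $S$ either unsaturates a forward cut edge or puts flow on a backward one, so the invariant that $\widetilde{f}$ saturates $\delta(S,T)$ is destroyed --- and with it the validity of your height assignment, since $h=n$ on $S$ and $h=0$ on $T$ is valid only when there are no residual $S \to T$ edges. Your fallback of adding saturated auxiliary edges $(s,u)$ for each deficit node $u$ does yield a pre-flow, but it changes the network: any cut with a deficit node on its $t$-side gains capacity, so the min-cut of the augmented graph need not correspond to a min-cut of $G$, and undoing this is not mere bookkeeping --- deleting the auxiliary edges reintroduces the very deficits you were trying to remove.

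The paper avoids all of this by never cancelling deficits up front. Lemma \ref{lem:no-excess-T} shows it suffices to reach a cut-saturating pseudo-flow in which all excess lies in $S$ and all deficit lies in $T$; only in that separated state are the repair paths guaranteed to stay within one side of the cut, so the final clean-up preserves saturation. Reaching that state is done by solving auxiliary max-flow problems \emph{restricted to one side of the cut} (Algorithm \ref{alg:separate-exc-def} and its mirror on the reversed network), and when local excess cannot be absorbed locally, the min-cut returned by the auxiliary problem is used to shrink the $t$-side --- the ``swapping'' step the paper identifies as its main technical contribution. To salvage your outline you would need to replace stage 2 with an argument of this kind rather than a global deficit-cancellation pass.
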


Our warm-start version of Push-Relabel has several phases. 
Within each phase, Push-Relabel with the gap relabeling heuristic is used as a subroutine on auxiliary graphs.

First, we show that one can modify the prediction $\widehat{f}$ to be a cut-saturating pseudo-flow; we call this cut-saturating pseudo-flow $f$. This is accomplished by running the cold-start Push-Relabel with the gap relabeling heuristic on the residual graph to find a max-flow/min-cut.

\begin{figure}
    \centering
    \includegraphics[width=\linewidth]{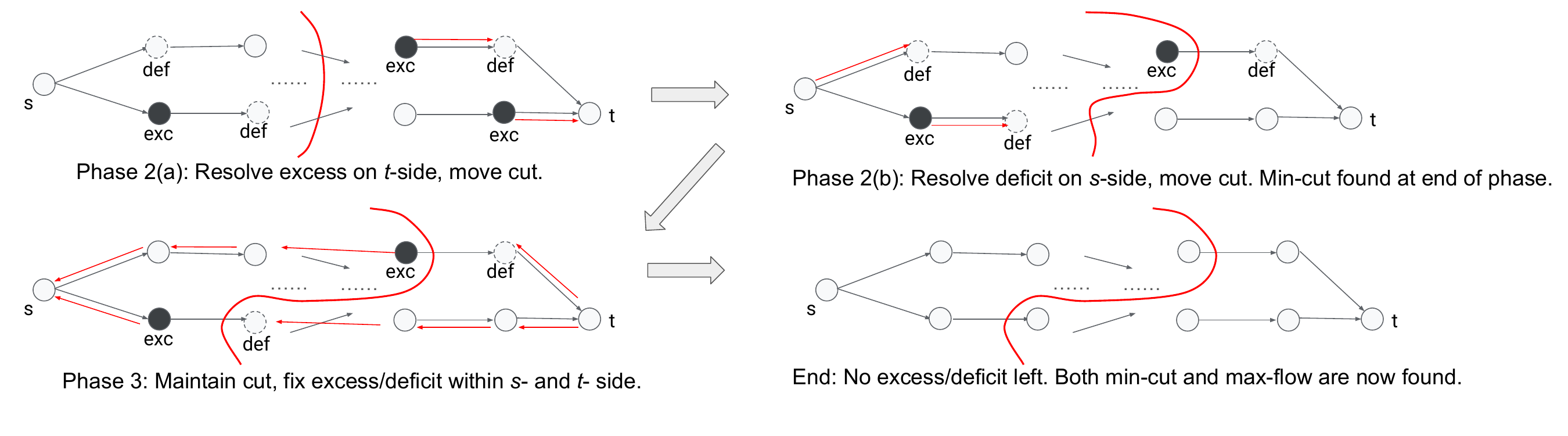}
    \caption{An illustration of different phases of warm-start Push-Relabel, starting with a cut-saturating pseudo-flow. The red curve denotes the cut. The black arrows denote the existing flows, whereas the red arrow denotes the flows sent in each phase to resolve excesses/deficits. Notice that as flows are sent, new edges become saturated and smaller cuts are found, swapping excess and deficit nodes to the opposite sides of the cut.}
    \label{fig:flow}
\end{figure}

We begin the second phase by
routing flow 
\emph{within} the two sides of the cut induced by the pseudo-flow
to resolve some of the excess and deficit. 
The maintained cut gradually changes as we send flow from node to node, and push certain nodes to different sides of the cut. 
We do this by running the standard, cold-start version of Push-Relabel on auxiliary networks on each side of the cut, 
and then adding the resulting flow to $f$. 
Because these auxiliary networks have small minimum cut value, 
the flows can be found quickly.
We continue changing the cut until all excess nodes end up on the $s$-side of the cut and all deficit nodes end up on the $t$-side of the cut.  
This ``swapping" procedure between excess and deficits nodes between the $s$- and $t$- sides of the cut is our biggest technical innovation.
Either the excesses are resolved within the $t$-side of the cut,
or we find a new cut between the $t$-side excess nodes and the $t$-side deficit nodes plus $t$.  
We modify the cut in $G$ accordingly to separate all excess from the $t$-side, 
which also results in a cut whose $t$-side is monotonically decreasing---an interesting point which we show also occurs in the cut maintained by the gap relabeling heuristic.
A mirrored version of this process is performed on the $s$-side of the cut.

In the final phase, we have a new cut-saturating pseudo-flow with all excess nodes
on the $s$-side of the cut and all deficit nodes on the $t$-side of the cut. 
This cut is actually a min-cut. On the $s$-side, the excess nodes send flow to the source, and on the $t$-side, the sink sends flow to deficit nodes (hence removing existing flows).
The result is a max-flow.
See Figure \ref{fig:flow} for an illustration of phases 2 and 3.

In Section \ref{sec: empirical}, we run our warm-start Push-Relabel compared to a cold-start version. We see that the warm-start improves over the cold-start by a larger percentage as the size of the image increases.

\subsection{Related work}
\label{sec:related_work}

Push-Relabel is one of the most popular algorithms for finding a max flow or min cut.
See Algorithm \ref{alg:vanilla-PR} for its statement.
The algorithm has running time $O(n^2m)$ and was designed by Goldberg and Tarjan \cite{goldberg-tarjan}.

Several computational studies focus on the performance of Push-Relabel.
A very popular heuristic is to choose the push operation to occur from the node with the largest height, 
and this gives theoretical improvements too, with running time $O(n^2 \sqrt{m})$.
In a well-known empirical study performed in 1997, largest-height Push-Relabel was the fastest max-flow algorithm on most classes of networks, outperforming Dinic's augmenting flow algorithm, Karzanov's algorithm, and Ahuja, Orlin, and Tarjan's excess--scaling algorithm \cite{ahuja1997}.
Since then, Hochbaum's Pseudo-flow algorithm---and extensions of it---have been shown in experiments to be faster than largest-height Push-Relabel \cite{hochbaum2008, chandran2009, goldberg2015faster}.
Throughout the development of new max-flow algorithms, Push-Relabel remains one of the most versatile and is often the standard benchmark to which new flow algorithms are compared.

Two notable heuristics for Push-Relabel are the global relabeling heuristic and gap relabeling heuristic \cite{johnson1993network, cherkassky1995, goldberg2008partial}. 
The \emph{global relabeling heuristic} occasionally updates the heights to be 
a node's distance from $t$ in the current residual graph.
Interestingly, these heuristics are more effective together than separately in practice \cite{cherkassky1995}.

The field of learning-augmented algorithms, 
also known as algorithms with predictions, 
has gained notable popularity over the past 5 years. 
An algorithm is given access to a prediction about a quantity pertaining to the input, 
and this prediction can guide the algorithm into making better choices. 
Predictions have primarily been used to improve competitive ratios
for online algorithms for problems in many areas, including scheduling \cite{lassota2023minimalistic, ahmadian2023robust}, caching \cite{lykouris2021competitive,bansal2022learning}, and data structures \cite{kraska2018case, lin2022learning}.
More recently, they have also been used to improve the running times of algorithms \cite{dinitz2021faster, chen2022faster, oki2023faster}.
Two independent groups initiated the study of learning-augmented max-flow by warm-starting Ford-Fulkerson procedures \cite{polak2022learning, davies23b}. 
These works show that the running time of Edmonds-Karp can be improved from $O(m^2n)$ to $O(m ||f^* - \widehat{f}||_1)$, for $f^*$ an optimal flow on the network and $\widehat{f}$ the prediction; 
experiments on image segmentation instances exemplify that the theory is predictive of practice \cite{davies23b}.

More on the many max-flow algorithms,can be found in the survey by Williamson \cite{williamson2019}, 
while more on learning-augmented algorithms can be found in the survey by
Mitzenmacher and Vassilvitskii \cite{mitzenmacher2022algorithms}. 

\section{Gap Relabeling Push-Relabel: Cold- and Warm-Start}
\label{sec:ws_pr}
Among the many heuristic adaptations for Push-Relabel,
the gap relabeling heuristic is known to empirically improve the performance. In this section, we analyze the performance of
Push-Relabel with the gap relabeling heuristic
(Algorithm \ref{alg:ws_pr}) when given a cut-saturating pre-flow $f$, and tie the running time to the error of $f$.

\begin{algorithm}[t]
\begin{algorithmic}
    \State  \textbf{Input:} Network $G$, a cut-saturating pre-flow $f$
    \State Construct residual network $G_f$ with capacity $c'$
    \State  Run Algorithm \ref{alg:initialheights} on $G$ and $f$, obtain $h$ 
    and $(S,T)$
    \State  Initialize $\theta=\min\{z \in \mathbb{Z}_{>0} :\nexists u \in T \text{ with } z=h(u) \}$
    \While {$\exists u \in T$ with $\textsf{exc}_f(u)>0$} 
    \If {$\exists v \in T$ with $h(u) = h(v) + 1$, $(u, v) \in E(G_f)$, and $f_{(u,v)} < c_{(u,v)}$}
    \State   Update $f$ by sending an additional flow value of $\min \{\textsf{exc}_f(u), c'_{(u,v)}\}$ along $(u,v)$
    \State  Update $G_f$
    \Else 
    \State  Relabel $u$ with $h(u)= \min_{v: (u, v) \in E(G_f) } h(v) + 1$
    \State  Update $\theta = \min\{z \in \mathbb{Z}_{>0} :\nexists u \in T \text{ with } z=h(u) \}$
    \For{$p \in T$ with $h(p) > \theta$}
        \State  Remove $p$ from $T$, add $p$ to $S$
        \State  Update $p$'s height to $h(p)=n$
    \EndFor
    \EndIf
    \EndWhile
    \State  Take $G_f$ as input and run Algorithm \ref{alg:vanilla-PR} on it to fix excesses, outputs flow $f^*$
    \State  Return cut parts $S$ and $T$, cut $\delta(S,T)$, and flow $f+f^*$
\end{algorithmic}
\caption{Warm-start Push-Relabel with Gap Relabeling}
\label{alg:ws_pr}
\end{algorithm}

Algorithm \ref{alg:ws_pr} begins by running Algorithm \ref{alg:initialheights} as a subroutine to find the $s-t$ cut saturated by $f$ and define valid heights for $f$ which also induce that cut. Algorithm \ref{alg:initialheights} runs a BFS in the residual graph to find all nodes that have a path to $t$ and names this set $T$. The other nodes belong to $S$. The cut $\delta(S,T)$ has to be saturated by $f$ (see Lemma \ref{lem:S_T_cut}).

From there on, Algorithm \ref{alg:ws_pr} has a two-phase structure. 
In phase one (the main \textsf{WHILE} loop), 
the algorithm maintains a set of $t$-side nodes of the cut, denoted by $T$, and all heights in $T$ must cover a series of consecutive numbers starting from $0$. Intuitively, for any node with excess in $T$, Algorithm \ref{alg:ws_pr} tries to resolve its excess by re-routing it to other nodes. 
If this is not possible, the algorithm forces the node (and possibly other nodes, too) to leave $T$ and changes the cut maintained by the pre-flow.
The cut only changes when a node is relabeled in a way that results in a break in the series of consecutive heights starting from $0$ in $T$, where the smallest missing height is denoted by $\theta$. The algorithm then removes all nodes from $T$ with height bigger than $\theta$; importantly, these nodes will \emph{never} enter $T$ again. Phase one terminates when $T$ has no more excess nodes. With the correct data structure, the threshold height $\theta$ and the set $T$ can be maintained at minimal cost.

At the end of phase one, 
despite potential excess nodes on the $s$-side, 
the cut obtained is a min-cut.
\begin{lemma}
\label{lem:pre_flow_no_excess_T}
Let $f$ be a pre-flow saturating cut $\delta(S, T)$ on network $G$. 
If there are no excess nodes in $T$, 
then all excess in $S$ can be sent back to $s$ without crossing the cut, implying that the cut $\delta(S, T)$ is a min-cut.
\end{lemma}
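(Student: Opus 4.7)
The plan is to reduce this to the standard pre-flow fact (Lemma \ref{lem: 2n-heights}) that every excess node has a path to $s$ in $G_f$, and then show that when the starting node lies in $S$, such a path automatically stays inside $S$. The core observation I would establish first is that $G_f$ contains no edges from $S$ to $T$. This is where ``saturating $\delta(S,T)$'' does all the work: by the same reasoning as in Lemma \ref{lem:S_T_cut}, every original edge $(u,v) \in E$ with $u \in S, v \in T$ has $f_{(u,v)} = c_{(u,v)}$ (killing the forward residual arc), and every original edge $(v,u) \in E$ with $v \in T, u \in S$ has $f_{(v,u)} = 0$ (killing the backward residual arc). So the residual graph has no arc crossing the cut from $S$ to $T$.

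Next I would invoke Lemma \ref{lem: 2n-heights} to conclude that every node $u \in S$ with $\textsf{exc}_f(u) > 0$ has a path $P_u$ in $G_f$ to $s$. Since no residual arc leaves $S$, every vertex on $P_u$ lies in $S$. I would then cancel excesses one at a time: for each excess $u \in S$, push $\textsf{exc}_f(u)$ along $P_u$ back to $s$, decreasing $f$ on forward edges of $P_u$ and increasing $f$ on backward edges. This preserves conservation at interior vertices of $P_u$, eliminates the excess at $u$, and only decreases the net outflow from $s$. Because every arc used is in $E(G_f[S])$, no edge with an endpoint in $T$ is touched. Iterating over all excess nodes in $S$ (and noting that excesses are integral and each step strictly decreases a nonnegative potential such as total excess) yields a feasible flow $f'$ that agrees with $f$ on every edge with at least one endpoint in $T$.

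Since $f'$ coincides with $f$ on the cut edges, $\delta(S,T)$ is still saturated by $f'$ and still carries zero flow on its reverse edges. Hence $\textsf{val}(f') = c(\delta(S,T))$. By weak duality for max-flow/min-cut, $\textsf{val}(f') \leq$ (min-cut capacity) $\leq c(\delta(S,T)) = \textsf{val}(f')$, so equality holds throughout, and $\delta(S,T)$ is a min-cut.

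I expect the only subtle step to be the first one, namely spelling out that ``$f$ saturates $\delta(S,T)$'' forces both the forward edges to be at capacity and the reverse edges to carry zero flow, so that no $S \to T$ residual arc survives. Once that is in hand, the remaining argument is a routine invocation of the standard excess-to-source pushback, restricted to the induced subgraph on $S$, followed by a one-line duality argument.
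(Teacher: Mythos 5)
Your proposal is correct and follows essentially the same route as the paper's proof: invoke Lemma \ref{lem: 2n-heights} to get a residual path from each excess node to $s$, observe that the saturated cut admits no residual arc from $S$ to $T$ so the path stays in $S$, cancel the excesses along such paths, and conclude via the fact that a feasible flow saturating a cut certifies a min-cut. Your write-up merely makes explicit two steps the paper leaves implicit (the no-$S\to T$-residual-arc observation via Lemma \ref{lem:S_T_cut}, and the weak-duality closing line).
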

\begin{proof}
It is known from the proof of the vanilla Push-Relabel algorithm that all excess nodes in a pre-flow must have a path back to $s$; see Lemma \ref{lem: 2n-heights}. When $f$ saturates $\delta(S,T)$, such a path cannot go from $S$ to $T$, so the path must be within $S$. The last two lines of Algorithm \ref{alg:ws_pr} will resolve the excesses without effecting the saturated cut. So we have a feasible flow saturating a cut, meaning the flow is a max-flow and the cut is a min-cut.
\end{proof}
For applications where max-flow is simply a subroutine for finding a min-cut rather than the goal---such as in \emph{image segmentation}---
in Algorithm \ref{alg:ws_pr}, one can omit running Algorithm \ref{alg:vanilla-PR} after the \textsf{WHILE} loop ends.
\iftrue
\begin{algorithm}[t!] \label{alg:height}
\caption{Define Heights}\label{alg:initialheights}
\begin{algorithmic}
\State  {\bfseries Input}: Network $G$, a cut-saturating pseudo-flow $f$
    \State  Define $h(s) = n$, $h(t)=0$
    \State  Run BFS in $G_{f}$
    \State  Let $T = \{u \in V : \exists u-t \text{ path in }G_f\}$
    \State  Let $S = V \setminus T$
    \For{ all $u \in S$}
    \State  Let $h(u)=n$
    \EndFor
    \For{ all $u \in T$}
    \State  Let $h(u)$ be the shortest path length from $u$ to $t$ in $G_f$
    \EndFor
    \State {\bfseries Output}: Valid heights $h$, and cut parts $S$ and $T$
\end{algorithmic}
\end{algorithm}
\fi

We show that the running time is tied to $\eta$, which, in this case, is the total excess in $f$.
\begin{theorem} \label{thm:pre_flow_ws}
Given a cut-saturating pre-flow $f$ with error $\eta$ on network $G$,
Algorithm \ref{alg:ws_pr} finds a max-flow/min-cut in  running time $O(\eta \cdot n^2)$.
\end{theorem}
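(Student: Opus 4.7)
The plan is to prove Theorem \ref{thm:pre_flow_ws} in two stages: correctness of Algorithm \ref{alg:ws_pr}, then its running time. For correctness, I would establish three invariants maintained throughout the main \textsf{WHILE} loop:
\begin{itemize}
\item[(I1)] $h$ is a valid set of heights for the current pre-flow $f$;
\item[(I2)] $\theta$ is the smallest positive integer not attained by $\{h(u) : u \in T\}$, and every height of a node in $T$ lies in $\{0, 1, \ldots, \theta-1\}$;
\item[(I3)] $f$ saturates the cut $\delta(S, T)$.
\end{itemize}
The base case follows from Algorithm \ref{alg:initialheights} (BFS distances in $G_f$ are standard valid heights) together with Lemma \ref{lem:S_T_cut} (the reachability partition from $t$ is saturated). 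Pushes preserve (I1)--(I3) trivially since they leave heights unchanged and act strictly within $T$; ordinary relabels of $u \in T$ (those not opening a new gap) strictly increase $h(u)$ and preserve validity by the relabel rule while keeping $h(u) < \theta$.

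The delicate step is the gap relabel, for which I would prove a stand-alone lemma: if (I1) holds, then $G_f$ contains no edge $(p,q)$ with $h(p) > \theta$ and $h(q) < \theta$, since otherwise validity would demand $h(p) \leq h(q)+1 < \theta$, a contradiction. Consequently, when the algorithm migrates every $p \in T$ with $h(p) > \theta$ into $S$ and resets $h(p) = n$, every edge of $G$ going from new $S$ to new $T$ must be either forward-saturated or backward-empty; Lemma \ref{lem:S_T_cut} then certifies (I3), and validity of the new heights is immediate since the nodes pushed into $S$ are all at height $n$. When phase 1 terminates with no excess in $T$, Lemma \ref{lem:pre_flow_no_excess_T} gives that $\delta(S, T)$ is a min-cut, and the closing call to Algorithm \ref{alg:vanilla-PR} resolves the leftover $S$-side excess by routing it back to $s$ without disturbing the cut, yielding a max-flow.

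For the running time I would carry out a three-part amortized analysis. First, because $\theta \leq n$ by (I2) and each relabel strictly increases a node's height, every node is relabeled at most $n$ times before being pushed out of $T$ to height $n$ (after which it never returns), giving $O(n^2)$ relabels in phase 1. Second, I would track the potential $\Phi = \sum_{u \in T,\, \textsf{exc}_f(u) > 0} h(u)$: we have $\Phi \leq \eta \cdot n$ initially since the total $T$-side excess is at most $\eta$ and heights in $T$ are at most $n$, every non-saturating push decreases $\Phi$ by at least one, and rises over the algorithm are charged against the $O(n^2)$ relabels and the contributions of saturating pushes, so the push work in phase 1 is $O(\eta n^2)$. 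Third, phase 2 begins with total excess at most $\eta$ confined to the $S$-side (the saturated cut blocks every residual $S$-to-$T$ path), so an analogous potential analysis on vanilla Push-Relabel over $G_f$ yields $O(\eta n^2)$ as well.

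The hardest step I anticipate is making the middle component of the running-time bound tight, ensuring it comes out as a crisp $O(\eta n^2)$ rather than an inflated $O(\eta n^2 + nm)$ or $O(n^2 m)$ that falls out of a naive potential calculation. This will require charging each unit of push or relabel work back to the initial $\eta$ excess rather than treating relabel costs as free overhead, and in particular leveraging (I2) so that the amortized cost of a relabel is tied to genuine excess motion across the shrinking $T$-side rather than to the full graph degree.
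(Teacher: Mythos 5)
Your correctness argument follows the same skeleton as the paper's (base case from Algorithm~\ref{alg:initialheights} and Lemma~\ref{lem:S_T_cut}, termination via Lemma~\ref{lem:pre_flow_no_excess_T}), and is in fact more careful than the paper's one-line justification: your stand-alone observation that validity of heights forbids any residual edge from $\{h>\theta\}$ to $\{h<\theta\}$, so that the gap relabel preserves a saturated cut, is exactly the right invariant and is only implicit in the paper. That part is fine.

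The running-time analysis, however, has a genuine gap, and it sits precisely at the step you flag as the ``hardest.'' Your potential $\Phi=\sum_{u\in T,\,\textsf{exc}_f(u)>0}h(u)$ counts excess \emph{nodes}, not excess \emph{units}. A saturating push along $(u,v)$ with $c'_{(u,v)}<\textsf{exc}_f(u)$ leaves $u$ in the sum and may add a newly activated $v$, so $\Phi$ \emph{increases} by $h(v)\le n$; the number of saturating pushes is only bounded by $O(nm)$ by the standard argument, so charging their contribution ``against the relabels'' does not work and the naive bound degrades to $O(n^2m)$, exactly the inflation you feared. The paper avoids this entirely by weighting the potential by the excess itself: $\Phi(T)=\sum_{u\in T}\textsf{exc}_f(u)\cdot h(u)$. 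With that choice \emph{every} push, saturating or not, moves $\delta\ge 1$ units of excess down one level and so decreases $\Phi$ by at least $1$; the only increases come from relabels, each unit of height increase contributing at most $\textsf{exc}_f(u)\le\eta$, and since each of the $n$ nodes can rise at most $n$ levels before leaving $T$ permanently (your invariant (I2)), the total increase is at most $\eta\cdot n^2$. Your proposal correctly identifies that the work must be charged to the $\eta$ units of excess, but the potential you chose cannot realize that charging; replacing it with the excess-weighted potential closes the gap and yields the clean $O(\eta\cdot n^2)$ for both the \textsf{WHILE} loop and the final cleanup phase (where heights are bounded by $2n$ instead of $n$).
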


\begin{proof}
The algorithm first works to resolve excess in $T$, possibly moving nodes from $T$ to $S$ to do so. 
Once all excess is in $S$, correctness follows from Lemma \ref{lem:pre_flow_no_excess_T}.
Note that the conditions of  Lemma \ref{lem:pre_flow_no_excess_T} are satisfied since
by Lemma \ref{lem:S_T_cut} the cut output by Algorithm \ref{alg:initialheights} is saturated by $f$.

To bound the running time of Algorithm \ref{alg:ws_pr},
we use a potential function argument 
that is different from than in the standard Push-Relabel analysis.

We first bound the running time of the main \textsf{WHILE} loop that terminates when all excess is contained in $S$ and the min-cut is found.
We define the potential function 
$\Phi(T) = \sum_{u \in T}\textsf{exc}_f(u) \cdot h(u)$.
The operations involved change the value of $\Phi(T)$ in the following way.
\begin{itemize}
\item Saturated/Unsaturated push: 
In either case, at least one unit of excess flow is pushed from a higher height to a lower height, since for edge $(u,v)$ to be admissible, $h(u) = h(v) + 1$. 
Therefore, $\Phi(T)$ decreases by at least 1. 
 
\item Relabeling: Any relabeling operation increases $\Phi(T)$. 
However, the total of all of these increases is at most $\eta \cdot n^2$.
The $\eta$ term upper bounds the possible excess at any node,
whereas the $n^2$ term is because a node's (of which there are at most $n$) height only ever increases, 
and the height cannot increase beyond $n$ before it must leave $T$ permanently. 

\item Removing nodes from $T$: Decreases $\Phi(T)$. 
\end{itemize}
Hence the total running time before finding the min-cut is bounded by $O(\eta \cdot n^2)$.

To bound the time for finding the max-flow, notice that the total excess in $G$ only decreases, so when we start to route excesses in $S$ to $s$, the total excess is also bounded by $\eta$.  
The same potential function argument can be used to prove it also takes $O(\eta \cdot n^2)$ time to resolve all excess in $S$, 
though using the fact that in Push-Relabel,
heights are always bounded by $2n$ (see Lemma \ref{lem: 2n-heights}).
\end{proof}
Although Algorithm \ref{alg:ws_pr} is presented as being seeded with an existing pre-flow, the same bound applies to the cold-start gap relabeling Push-Relabel when the min-cut of $G$ is at most $\eta$. 
This will prove useful in Section \ref{sec:ws_pseudoflow},
as we repeatedly use Algorithm \ref{alg:ws_pr} as a subroutine to fix excess and deficits and redefine cuts on networks with small cut value.
\begin{corollary}[cold-start run-time with $\eta$ min-cut value]
\label{cor:cs_pr}
If network $G$ is known to have a max-flow/min-cut value of at most $\eta$, 
one can use Algorithm \ref{alg:ws_pr} to 
obtain a max-flow and min-cut for $G$ in running time $O(\eta \cdot n^2)$.
\end{corollary}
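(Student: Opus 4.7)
My plan is to apply Algorithm~\ref{alg:ws_pr} with the standard cold-start initialization $f^{\textsf{init}}$, where $f^{\textsf{init}}_{(s,u)} = c_{(s,u)}$ for edges leaving $s$ and $f^{\textsf{init}}_e = 0$ elsewhere. Since $f^{\textsf{init}}$ saturates the trivial cut $\delta(\{s\}, V \setminus \{s\})$, it is a cut-saturating pre-flow and a valid input to Algorithm~\ref{alg:ws_pr}. Correctness of the min-cut output then follows from Lemmas~\ref{lem:S_T_cut} and~\ref{lem:pre_flow_no_excess_T}, exactly as in the pre-flow warm-start case.

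The first step is to replay the potential-function analysis in the proof of Theorem~\ref{thm:pre_flow_ws}. Using $\Phi(T) = \sum_{u \in T} \textsf{exc}_f(u) \cdot h(u)$, each push still decreases $\Phi$ by at least one, and the total relabel increase is bounded by the largest per-node excess at any relabel multiplied by the total height increase across all nodes, which is $O(n^2)$ (each of the $n$ nodes sees its height rise from $0$ to at most $n$ before leaving $T$ under gap relabeling). If I can argue the per-node excess at every relabel is $O(\eta)$, then the main \textsf{WHILE} loop runs in $O(\eta \cdot n^2)$.

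The second step handles the final call to Algorithm~\ref{alg:vanilla-PR} in phase~2, which routes any residual $S$-side excess back to $s$. Invoking Lemma~\ref{lem: 2n-heights}, heights stay bounded by $2n$ throughout this phase, and the same potential argument gives $O(\eta \cdot n^2)$ provided the total $S$-side excess at the start of phase~2 is $O(\eta)$. Combining the two phases gives the advertised bound.

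The main obstacle is rigorously justifying the per-node (and total) excess bound of $O(\eta)$ in the cold-start setting, since the initial excess of $f^{\textsf{init}}$ equals the full out-capacity of $s$, which can exceed the min-cut $\eta$. The cleanest ways to close this gap are either (i) to introduce a modified cold-start initialization that caps the initial outflow so that total excess is $O(\eta)$ (for example, by saturating only edges in an easily found cut of value $O(\eta)$), or (ii) to refine the amortized argument by exploiting that any excess beyond the min-cut cannot reach $t$ in the residual graph, hence is forced out of $T$ by the gap relabeling heuristic at total cost absorbable into the $O(\eta \cdot n^2)$ budget. I would pursue option~(ii) first, since it keeps the algorithm identical to its pre-flow warm-start version and isolates the new ingredient to the analysis.
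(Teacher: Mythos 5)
There is a genuine gap here, and you have correctly located it yourself: with the standard cold-start initialization, the total excess equals the full out-capacity of $s$, which can vastly exceed $\eta$, and neither of your two proposed repairs closes the hole. Your option (i) as stated is circular: ``saturating only edges in an easily found cut of value $O(\eta)$'' presupposes that you can already exhibit a cut of value $O(\eta)$ in $G$, but the out-edges of $s$ need not form one, and finding any such cut is essentially the problem you are trying to solve. Your option (ii) is the harder route and is unlikely to succeed as described: in the potential argument, each relabel of a node $u$ increases $\Phi$ by $\textsf{exc}_f(u)$, and a neighbor of $s$ carrying excess $C \gg \eta$ can be relabeled up to $n$ times before the gap heuristic ejects it from $T$, contributing $C \cdot n$ to the potential budget and hence permitting that many pushes. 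The paper explicitly flags this: unlike Ford--Fulkerson, Push-Relabel does not admit a naive $O(\eta\cdot m)$ bound when only the min-cut value is small, precisely because the excess in flight is not controlled by the cut value.

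The missing idea is a super-source reduction that manufactures a cut of value exactly $\eta$ by construction. The paper builds an auxiliary graph $G'$ by adding a new node $s^*$ and a single edge $(s^*, s)$ of capacity $\eta$, and initializes the pre-flow by saturating only that edge (all other edges carry zero flow). This pre-flow saturates the cut $\delta(\{s^*\}, V)$, so it is a valid cut-saturating pre-flow input to Algorithm~\ref{alg:ws_pr}, and its total excess is $\eta$ by fiat, so Theorem~\ref{thm:pre_flow_ws} applies verbatim to give the $O(\eta \cdot n^2)$ bound. Since the min-cut of $G$ is at most $\eta$, the edge $(s^*,s)$ is never the bottleneck, and the resulting max-flow/min-cut in $G'$ restricted to $G$ is a max-flow/min-cut in $G$. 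Once you have this construction, the rest of your outline (correctness via Lemmas~\ref{lem:S_T_cut} and~\ref{lem:pre_flow_no_excess_T}, and the two-phase potential analysis) goes through exactly as you describe.
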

\begin{proof}
Create an auxiliary graph $G'$ by taking a copy of $G$ and adding a super-source $s^*$ and an edge $(s^*, s)$ with capacity $\eta$. Create a pre-flow $f^{\textsf{init}}$ on $G'$ by saturating $(s^*, s)$ and letting $f_e=0$ on all other edges in $G'$.
Now run Algorithm \ref{alg:ws_pr} with inputs $G'$ and $f^{\textsf{init}}$.
The initial (and maximum) excess in $G'$ was $\eta$, 
and so the run-time is bounded by $O(\eta \cdot n^2)$, 
as in the proof of Theorem \ref{thm:pre_flow_ws}.
\end{proof}

Note that we only assume $\eta$ to be known for simplicity of argument. A slightly modified algorithm can achieve the same running time with unknown $\eta$. See the discussion in Appendix \ref{sec:predict-cut}.

\section{Warm-starting Push-Relabel with General Pseudo-flows}
\label{sec:ws_pseudoflow}
We extend the results in Section \ref{sec:ws_pr} to when the given prediction is a general pseudo-flow $\widehat{f}$ as opposed to a cut-saturating pre-flow, i.e.,  
$\widehat{f}$ may not be cut-saturating and may have deficit nodes. 
We assume $\widehat{f}$ has $\eta$ error, as defined in Definition \ref{def:error}. 
The first phase of our algorithm augments $\widehat{f}$ by finding an $s-t$ flow to add to 
$\widehat{f}$ so that the resulting pseudo-flow saturates a cut. 
Then, in phase two, it sends flow within both sides of the cut to eliminate and swap excess/deficit nodes, until all excess nodes are on the $s$-side of the cut and all deficit nodes are on the $t$-side. The min-cut is found at this point. 
Finally in phase three, the algorithm sends the remaining excess to $s$ and deficit to $t$ to obtain a feasible flow, which is also a max-flow.

\subsection{Obtaining a cut-saturating pseudo-flow from $\hat{f}$}

The first phase is to pre-process $\widehat{f}$ into a cut-saturating pseudo-flow on $G$. See Algorithm \ref{alg:pre-process}.

\iftrue
\begin{algorithm}[t!]
\caption{Find a Cut-saturating Pseudo-flow}\label{alg:pre-process}
\begin{algorithmic}
\State  {\bfseries Input}: Network $G$, a pseudo-flow $\widehat{f}$
    \State  Build $G'$, a copy of the residual network $G_{\widehat{f}}$
    % \STATE Let  be a copy of $G_{\widehat{f}}$
    \State  Add super-source $s^*$, edge $(s^*, s)$ with capacity $\eta$ to $G'$
    \State  Let $f^{\textsf{init}}$ saturate $(s^*,s)$ and have flow 0 on all other edges
    \State  Run Algorithm \ref{alg:ws_pr} with inputs $G'$ and $f^{\textsf{init}}$, call output $f'$
    \State  Delete $f'_{(s^*,s)}$ from $f'$
    \State  {\bfseries Output}: $f = f'+\widehat{f}$
\end{algorithmic}
\end{algorithm}
\fi

We create the auxiliary graph $G'$ as in Algorithm \ref{alg:pre-process},
and then run the gap-relabeling Push-Relabel on $G'$ (together with the standard initializing pre-flow) to find a minimum cut between $s^*$ and $t$ and obtain a flow $f'$. Corollary \ref{cor:cs_pr} bounds the Push-Relabel run-time in this case. Adding $f'$ to $\widehat{f}$ creates a cut-saturating pseudo-flow.

The next lemma proves the output of this algorithm satisfies the desired properties and that the algorithm runs in time $O(\eta \cdot n^2)$.
\begin{lemma}
\label{lem:fixing_cut}
Suppose $\widehat{f}$ is a predicted pseudo-flow with error $\eta$ for network $G$.
Then Algorithm \ref{alg:pre-process} finds a cut-saturating pseudo-flow $f$ for $G$ 
with error $ \eta$ in time $O(\eta \cdot n^2)$.
\end{lemma}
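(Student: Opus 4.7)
The plan is to verify, for $f := \widehat{f} + f''$ where $f''$ is $f'$ with its $(s^*,s)$-coordinate stripped away, three conditions: (a) $f$ is a pseudo-flow on $G$ saturating some $s$-$t$ cut, (b) $\sum_u \textsf{exc}_f(u) + \textsf{def}_f(u) \leq \eta$, and (c) Algorithm~\ref{alg:pre-process} runs in $O(\eta n^2)$ time. The crucial preliminary observation is that the hypothesis ``$\widehat{f}$ is $\eta$-far from cut-saturating'' forces the maximum $s$-$t$ flow value in the residual network $G_{\widehat{f}}$ to be at most $\eta$: if $f^\star$ is the feasible flow on $G_{\widehat{f}}$ with $\textsf{val}(f^\star)\leq \eta$ such that $\widehat{f}+f^\star$ is cut-saturating in $G$, then by Lemma~\ref{lem:S_T_cut} there is no $s$-$t$ path in $G_{\widehat{f}+f^\star}=(G_{\widehat{f}})_{f^\star}$, so $f^\star$ is already a maximum $s$-$t$ flow on $G_{\widehat{f}}$ and its value is at most $\eta$. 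Consequently the $s^\ast$-$t$ max-flow value in $G'$ coincides with the $s$-$t$ max-flow value in $G_{\widehat{f}}$ and is at most $\eta$.

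With this in hand, I would invoke Corollary~\ref{cor:cs_pr}, whose construction is precisely what Algorithm~\ref{alg:pre-process} performs on $G'$: the call to Algorithm~\ref{alg:ws_pr} on $G'$ seeded by $f^{\textsf{init}}$, which places $\eta$ units of excess at $s$, returns a maximum $s^\ast$-$t$ flow $f'$ in time $O(\eta n^2)$, settling (c). Because $f'$ obeys flow conservation at the internal node $s$ of $G'$, deleting the $(s^\ast,s)$ coordinate yields an $s$-$t$ flow $f''$ on $G_{\widehat{f}}$ whose value equals $f'_{(s^\ast,s)}$; by the preceding observation $f''$ is in fact a maximum $s$-$t$ flow on $G_{\widehat{f}}$. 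The standard identification between residual flows on $G_{\widehat{f}}$ and capacity-feasible pseudo-flows on $G$ (forward residual flow adds to $\widehat{f}$, backward residual flow cancels existing flow) then shows that $f = \widehat{f} + f''$ respects the capacities of $G$. Since $f''$ is a max flow in $G_{\widehat{f}}$, there is no $s$-$t$ path in $G_f = (G_{\widehat{f}})_{f''}$, so applying Lemma~\ref{lem:S_T_cut} to the set of nodes that can reach $t$ in $G_f$ produces a cut saturated by $f$, giving (a).

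For (b), I would note that adding any feasible $s$-$t$ flow leaves excess and deficit unchanged at every internal node, because $f''$ satisfies conservation at every $u \in V \setminus \{s,t\}$. Hence $\textsf{exc}_f(u)=\textsf{exc}_{\widehat{f}}(u)$ and $\textsf{def}_f(u)=\textsf{def}_{\widehat{f}}(u)$ for every such $u$, and the error assumption on $\widehat{f}$ supplies $\sum_u \textsf{exc}_f(u)+\textsf{def}_f(u) \leq \eta$. Combined with cut-saturation (distance $0$ from being cut-saturating), Definition~\ref{def:error} then yields error at most $\eta$. The main obstacle is the preliminary observation bounding the max flow in $G_{\widehat{f}}$ by $\eta$, which is what lets the $(s^\ast,s)$-edge of capacity $\eta$ harmlessly replace the ``true'' source; once that is in place, the remainder is routine residual-flow bookkeeping and an invocation of Corollary~\ref{cor:cs_pr}.
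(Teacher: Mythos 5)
Your proposal is correct and follows essentially the same route as the paper: bound the max-flow value of $G_{\widehat{f}}$ by $\eta$ using the ``$\eta$-far from cut-saturating'' hypothesis, invoke Corollary~\ref{cor:cs_pr} on the super-source construction to get a maximum residual flow in $O(\eta \cdot n^2)$ time, and add it to $\widehat{f}$ to obtain a capacity-feasible, cut-saturating pseudo-flow whose excesses and deficits are unchanged at internal nodes. You simply spell out two steps the paper leaves implicit (the Lemma~\ref{lem:S_T_cut} justification that $f^{\star}$ is already maximum in $G_{\widehat{f}}$, and the explicit excess/deficit accounting for the error bound), which is fine.
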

\begin{proof}
\iffalse
To begin, if the prediction $\widehat{f}$
violates capacity constraints, 
we round down the flow on violated edges.
More formally, for any edge $e \in E$ with  $\widehat{f}_e> c_e$, 
update the predicted flow to be $\widehat{f}_e= c_e$. 
This step just transfers all of the error in the prediction to be 
on the conservation constraints, 
and not on the capacity constraints; it can be done without loss of generality in time $O(|E|)$.
\fi

In the residual graph $G_{\widehat{f}}$, 
the min-cut is bounded by $\eta$, since it is at most $\eta$ far from being cut-saturating. 
Therefore, we can apply Corollary \ref{cor:cs_pr} to $G_{\widehat{f}}$ 
and obtain an optimal flow $f'$ on $G_{\widehat{f}}$ in $O(\eta \cdot n^2)$ running time.

The flow we desire is $f_e=f_e'+\widehat{f}_e$ for all $e \in E$. 
It is cut-saturating for $G$ by the optimality of $f'$ on $G_{\widehat{f}}$. 
Further, it is a pseudo-flow since $f'$ does not have any excess or deficit in $G_{\widehat{f}}$ and clearly $f'_e+\widehat{f}_e \leq c_e$ for all $e \in E$.
\end{proof}

Notably, one can also run  Algorithm \ref{alg:ws_pr} and terminate it upon finding the min-cut, in which case $f'$ will be a pre-flow on $G_{\widehat{f}}$, and the resulting $f=f' + \widehat{f}$ will have total excess bounded by $2\eta$. In fact, one can do this in other steps of the algorithm as well, if the goal is only to find a min-cut, 
and only lose an additional constant factor in the running time; see Appendix \ref{sec:early-termination}.
Additionally, in practice one may wish to use a predicted cut instead of finding a cut-saturating pseudo-flow as in Algorithm \ref{alg:pre-process}; see the discussion in Appendix \ref{sec:predict-cut}.

\subsection{Saturating a cut separating excesses from deficits}
\label{sec:flip}

Once we have a cut-saturating pseudo-flow $f$, 
which by Lemma \ref{lem:fixing_cut} can be obtained from the prediction 
using Algorithm \ref{alg:pre-process}, 
we are ready to define the accompanying heights and cut using Algorithm \ref{alg:initialheights} again.
Note that the initial cut with two sides $T_0 = \{u\in V: \exists  u-t \text{ path in }G_{f}\}$ and $S_0 = V \setminus T_0$ 
is by definition the same cut as that induced by the heights (as in Lemma \ref{lem:induced-cut}).

We update the pseudo-flow so that it always maintains a saturated cut, but eventually, the nodes with excess and the nodes with deficit are separated by the saturated cut. This is a generalization of what happens with Algorithm \ref{alg:ws_pr}, where we transfer all excess nodes to the $s$-side of the cut. Here, we transfer all excess to the $s$-side, and all deficit to the $t$-side of the cut. 
Interestingly, we observe that this is the sufficient condition for the pseudo-flow to saturate a min-cut.

\begin{lemma}
\label{lem:no-excess-T}
For a cut-saturating pseudo-flow $f$ for a network $G$, let $\delta(S,T)$ be a cut it saturates.
If all the nodes in $T$ have no excess and all the nodes in $S$ have no deficit, then the cut is a minimum cut. 
\end{lemma}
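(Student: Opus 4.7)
The plan is to appeal to max-flow/min-cut duality by showing that $f$ can be massaged into a feasible $s$--$t$ flow that still saturates $\delta(S,T)$, using only augmentations internal to $S$ or internal to $T$. Any feasible flow saturating $\delta(S,T)$ has value $C(S,T)$, so weak duality then forces $\delta(S,T)$ to be a minimum cut.

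The heart of the argument is a pair of symmetric reachability claims: (a)~every excess node $u \in S$ admits a path to $s$ in $G_f$ restricted to $S$, and (b)~every deficit node $v \in T$ is reachable from $t$ in $G_f$ restricted to $T$. I would prove (a) by contradiction; (b) follows from the mirror argument applied to $T$ and $t$. Let $B \subseteq S$ be the set of nodes from which $s$ is not reachable in $G_f$ restricted to $S$. By definition of $B$, no residual edge of $G_f$ goes from $B$ to $S \setminus B$. Moreover, since $\delta(S,T)$ is saturated, every residual cut edge of $G_f$ is directed from $T$ to $S$, so no residual edge leaves $S$ toward $T$, and in particular none leaves $B$ toward $T$. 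Combining these, $G_f$ has no residual edge leaving $B$ at all. Translating back to $f$, every original edge from $B$ to $V \setminus B$ is saturated and every original edge from $V \setminus B$ to $B$ carries zero flow, so the net flow into $B$ equals $-C(B, V \setminus B) \leq 0$. On the other hand, since $s, t \notin B$ and no node of $S$ has deficit by hypothesis, conservation accounting gives that the net flow into $B$ equals $\sum_{u \in B} \textsf{exc}_f(u) \geq 0$. The only way both hold is if $B$ has no excess nodes, which is exactly (a).

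Given (a) and (b), I push each excess in $S$ back to $s$ along a residual $S$-path and pull each deficit in $T$ from $t$ along a residual $T$-path. These augmentations never touch cut edges, so $\delta(S,T)$ remains saturated throughout. After they finish, no node has excess or deficit (the hypotheses rule out excess in $T$ and deficit in $S$, and the augmentations eliminate the rest), so $f$ is a feasible $s$--$t$ flow of value $C(S,T)$, making $\delta(S,T)$ a minimum cut.

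The main obstacle is the isolation step for $B$: the argument crucially combines the saturated-cut hypothesis with the definition of $B$ to strengthen ``no residual edge from $B$ into $S \setminus B$'' to ``no residual edge leaves $B$ at all,'' which is what pins down the sign of the net flow and yields the contradiction. Once that isolation is in place, the conservation calculation and its symmetric counterpart in $T$ are routine.
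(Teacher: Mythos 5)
Your proof is correct and follows the same overall skeleton as the paper's: establish that every excess node of $S$ reaches $s$ by a residual path inside $S$ and every deficit node of $T$ is reached from $t$ by a residual path inside $T$, augment along such paths to restore feasibility without disturbing the saturated cut, and conclude by weak duality. The one genuine difference is how the reachability claims are obtained. The paper imports them from an external result (Lemma~5 of Davies et al., restated as Lemma~\ref{lem:old_lemma}: every excess node reaches a deficit node or $s$ in $G_f$) and then uses saturation of $\delta(S,T)$ only to rule out the ``reaches a deficit node in $T$'' alternative. You instead prove the claim from scratch with a conservation argument on the set $B$ of nodes of $S$ that cannot reach $s$ within $S$: saturation of the cut plus the definition of $B$ shows no residual edge leaves $B$, so the net flow into $B$ is $-C(B, V\setminus B)\le 0$, while the no-deficit hypothesis on $S$ forces it to equal $\sum_{u\in B}\textsf{exc}_f(u)\ge 0$, whence $B$ contains no excess node. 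This is a clean, self-contained substitute for the cited lemma (and is essentially how that lemma is proved in general); it buys independence from prior work at the cost of a slightly longer argument. Both proofs share the same mild informality in the augmentation step --- one should note that each augmentation preserves the hypotheses (cut still saturated, no new excess in $T$ or deficit in $S$), so the reachability claim can be reapplied until feasibility is reached --- but this is routine and the paper elides it too.
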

Lemma \ref{lem:no-excess-T} is essentially the analog of Lemma \ref{lem:pre_flow_no_excess_T} in the more general pseudo-flow setting. The proof techniques are similar---we prove that a flow can be found by sending all excess flow back to $s$, and by sending flow from $t$ to all deficits.
This fixes all excess and deficit, while maintaining the same cut. 

To prove Lemma \ref{lem:no-excess-T}, we use the following result from Davies et al. \cite{davies23b}:
\begin{lemma}[Lemma 5, restated from \cite{davies23b}]
\label{lem:old_lemma}
Given any pseudo-flow $f$ for network $G$, every excess node has a path in $G_f$ \emph{to} either a deficit node or $s$; every deficit node has a path in $G_f$ \emph{from} either an excess node or $t$. 
\end{lemma}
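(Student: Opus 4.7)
The plan is to prove the first claim (every excess node reaches a deficit node or $s$ in $G_f$) directly; the second claim then follows by the mirror argument with the roles of in/out, excess/deficit, and $s/t$ swapped. Fix an excess node $u$ and let $R \subseteq V$ denote the set of nodes reachable from $u$ in $G_f$. Suppose for contradiction that $R$ contains neither $s$ nor any deficit node. I would then compute the quantity $\sum_{v \in R}(\textsf{in}(v) - \textsf{out}(v))$ in two different ways and show the two estimates contradict each other.

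The first estimate is edge-level and uses the maximality of $R$ under residual reachability. For any $(a,b)\in E$ with $a\in R$ and $b\notin R$, the forward residual $c_{(a,b)} - f_{(a,b)}$ must be zero, otherwise $b$ would also lie in $R$; hence such edges are saturated. For any $(a,b) \in E$ with $a\notin R$ and $b\in R$, the backward residual $f_{(a,b)}$ must be zero, otherwise the reverse edge $(b,a) \in E(G_f)$ would put $a$ in $R$. Canceling contributions of edges internal to $R$, this gives
\[
\sum_{v \in R}\bigl(\textsf{in}(v) - \textsf{out}(v)\bigr) \;=\; \sum_{\substack{(a,b)\in E\\ a\notin R,\ b\in R}} f_{(a,b)} \;-\; \sum_{\substack{(a,b)\in E\\ a\in R,\ b\notin R}} f_{(a,b)} \;=\; -\!\!\sum_{\substack{(a,b)\in E\\ a\in R,\ b\notin R}}\! c_{(a,b)} \;\le\; 0.
\]
The second estimate is node-level. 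Since $s \notin R$, the same sum equals $\sum_{v\in R\setminus\{s,t\}}(\textsf{exc}_f(v)-\textsf{def}_f(v))$, plus the net inflow at $t$ if $t\in R$. By hypothesis no node in $R$ has deficit, and $\textsf{exc}_f(u)>0$, so the finite-sum part is at least $\textsf{exc}_f(u)>0$. If $t \notin R$ we already have a contradiction; otherwise we need the $t$-contribution to be non-negative, which is the main technical point.

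The main obstacle is exactly the case $t \in R$: there is no conservation constraint at $t$, so one must argue that $t$'s net inflow is non-negative. Under the standard convention that $t$ has no outgoing edges in $G$, the net inflow at $t$ is $\sum_{e=(\cdot,t)} f_e \ge 0$, closing the argument. If this convention is not explicit in the setup, I would add a short preprocessing reduction — adjoin a super-sink and route all edges out of $t$ to it — to enforce it without changing the structure of excesses, deficits, or residual reachability. The second half of the lemma is then proved by taking $R'$ to be the set of nodes that can reach a fixed deficit node in $G_f$ and running the symmetric computation: the edge-level sum of $\textsf{out}-\textsf{in}$ over $R'$ is again non-positive, while the node-level sum is at least $\textsf{def}_f(u)>0$ when $R'$ contains no excess node and no $t$, using the dual convention that $s$ has no incoming edges so its net outflow is non-negative.
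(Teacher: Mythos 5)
The paper does not actually prove this lemma: it is imported verbatim as ``Lemma 5, restated from \cite{davies23b}'' and used as a black box, so there is no in-paper argument to compare against. Your proof is correct and self-contained: the reachability set $R$ is residually closed, so every original edge leaving $R$ is saturated and every original edge entering $R$ carries zero flow, giving $\sum_{v\in R}(\textsf{in}(v)-\textsf{out}(v))\le 0$, while the node-level accounting gives a strictly positive value once $R$ contains no deficit node and not $s$ --- this is the same style of cut-counting argument the paper itself uses informally (e.g.\ in Lemma \ref{lem:S_T_cut} and the reachability claim inside Lemma \ref{lem: 2n-heights}), as opposed to the flow-decomposition route one could also take. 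You are right that the only delicate point is the case $t\in R$: the lemma is genuinely false without the convention that $t$ has no outgoing edges (take a single edge $(t,a)$ carrying one unit of flow; then $a$ is an excess node whose only residual neighbors are $t$ and back, reaching neither $s$ nor a deficit node), and the paper's setup never states this convention explicitly, so flagging it is appropriate rather than pedantic. The one place I would tighten your write-up is the proposed preprocessing fix: ``adjoin a super-sink'' does not obviously preserve the excess/deficit structure, since demoting $t$ to an ordinary node subject to conservation can turn $t$ itself into an excess or deficit node and thereby change which nodes the lemma is quantifying over; it is cleaner to simply state the no-outgoing-edges-at-$t$ (and no-incoming-edges-at-$s$) convention as a standing assumption, which is how the lemma is used everywhere in this paper.
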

\begin{proof}[Proof of Lemma \ref{lem:no-excess-T}]
Consider the residual network $G_f$.
By Lemma \ref{lem:old_lemma}, every excess node $u$ in $S$ must have a path to either a deficit node or to $s$. Since the current pseudo-flow $f$ saturates a cut, the path cannot go across this cut and reach $T$, where all the deficits are. Therefore, $u$ has a path back to $s$, which only uses nodes in $S$. 
Similarly, by Lemma \ref{lem:old_lemma}, for every deficit node $v \in T$ there is a path that starts with either an excess node or $t$ and ends with $v$. Again, all excesses are in $S$ and the cut $\delta(S,T)$ is already saturated by $f$, so there is no path from $S$ to $T$. This path then is from $t$ to $v$ and only uses nodes in $T$.

It follows that we can send all excess to $s$ and send flow from $t$ to all deficit nodes until the pseudo-flow becomes a feasible flow. Notice that $\delta(S,T)$ remains saturated in this process. A feasible flow saturating a cut is a max-flow, and $\delta(S,T)$ is a min-cut.
\end{proof}

By Lemma \ref{lem:no-excess-T}, it is sufficient to find a pseudo-flow and accompanying saturated cut where the excess nodes are all on the $s$-side and the deficit nodes are all on the $t$-side. We begin by focusing on the nodes on the $t$-side of the cut, then briefly justify that 
the same can be done for the $s$-side by considering the backwards network.

\paragraph{Moving excess to the $s$-side.} To resolve all excess on the $t$-side, 
we solve an auxiliary graph problem, where the goal is to send the maximum amount of flow from excess nodes to either deficit nodes or $t$ \emph{within} the $t$-side (currently denoted $T_0$). If the max-flow in this problem matches the total excess in $T_0$, all excess can be resolved locally and only deficits remain; otherwise, the max-flow solution on the auxiliary graph also provides us with a min-cut that ``blocks'' excess nodes from deficit nodes and $t$. This cut will become the new cut maintained by the pseudo-flow after adding the auxiliary flow to it.

For the construction of the auxiliary graph $G'$, take the residual graph induced on the nodes $T_0$, $G_f[T_0]$. Add a super-source and -sink $s^*$ and $t^*$ to it. 
Add edges $(s^*,u)$ with capacity $\textsf{exc}_f(u)$ for every excess node $u \in T_0$; add edges $(v,t^*)$ with capacity $\textsf{def}_f(v)$ for every deficit node $v \in T_0$; and finally, add an edge $(t, t^*)$ with capacity $\eta+1$.

\begin{algorithm}[t!] 
\caption{Moving all excess to the $s$-side of the cut} \label{alg:separate-exc-def}
\begin{algorithmic}
\State  {\bfseries Input}: Network $G$, a cut saturating pseudo-flow $f$
\State  Run Algorithm \ref{alg:initialheights}, get output heights $h$
\State  Let $T_0= \{u\in V: \exists  u-t \text{ path in }G_{f}\}$ and $S_0 = V \setminus T_0$
\State  Build the residual $G_{f}$
    \State  Build $G'$ on copy of $G_{f}[T_0]$ plus $ \{s^*,t^*\}$
    \For{excess node $u \in T_0 \setminus \{t\}$ } 
    \State  Add edge $(s^*,u)$ with capacity $\textsf{exc}_f(u)$
    \EndFor
    \For{deficit node $v \in T_0$ } 
    \State  Add edge $(v,t^*)$ with capacity $\textsf{def}_f(u)$
    \EndFor
    \State  Add edge $(t,t^*)$ with capacity $\eta+1$
    \State  Let $f^{\textsf{init}}_{(s^*,u)}=c_{(s^*,u)}$ for all $(s^*,u)$, and all other $f^{\textsf{init}}_e=0$
\State  Run Algorithm \ref{alg:ws_pr} on $G'$ and $f^{\textsf{init}}$, outputs $f'$ and $T_0'$, $T_0''$
\For{all copies of $e=(u,v) \in E(G_f)$ where $f'_e>0$}
    \State  Update $f_e \leftarrow f_e+f'_e$
\EndFor
    \State  {\bfseries Output}: Flow $f$ and cut parts $S_0 \cup T_0'$ and $T_0''$
\end{algorithmic}
\end{algorithm}
 
When we run cold-start Push-Relabel (Algorithm \ref{alg:ws_pr}) on 
$G'$,
it outputs a flow $f'$ and the $s^*-t^*$ cut $\delta(T'_0, T''_0)$.
Note that $t \in T''_0$, since $(t, t^*)$ has infinite capacity and therefore cannot be in the cut.
Any $s^*-t^*$ path $p$ in $G'$ along which $f$ sends $\delta$ units of flow exactly identifies nodes $u$ and $v$ (where $(s^*,u) \in p$ and $(v,t^*) \in p$) for which $\delta$ units of flow can be sent from $u$ to $v$ along the interior of $p$ in $G_f$. Thus we can send flow as indicated by $f'$ to update $f$. See Algorithm \ref{alg:separate-exc-def} for details.
We obtain the following guarantee on the updated pseudo-flow $f$.
\begin{claim}
\label{claim:help-excess-t}
In Algorithm \ref{alg:separate-exc-def}, the output pseudo-flow $f$ saturates the cut $\delta(S_0 \cup T'_0, T''_0)$, and all excess nodes are in $S \cup T'_0$. 
Moreover, the total excess and deficit in $G$ has not increased.
\end{claim}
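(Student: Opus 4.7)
The plan is to deduce the three assertions from two facts about the auxiliary max-flow problem: (i) Algorithm \ref{alg:ws_pr} run on $G'$ returns a genuine max-flow $f'$ together with a min-cut $\delta(T_0', T_0'')$, and (ii) the total capacity of edges leaving $s^*$ in $G'$ is at most $\eta$ (the sum of excesses in $T_0$), so the min-cut value in $G'$ is at most $\eta$. This forces $s^* \in T_0'$, $t^* \in T_0''$, and also $t \in T_0''$; otherwise $\delta(T_0', T_0'')$ would cost at least $\eta+1$ through the edge $(t,t^*)$. Standard max-flow/min-cut duality then gives: every $G'$-edge from $T_0'$ to $T_0''$ is saturated by $f'$, and every $G'$-edge from $T_0''$ to $T_0'$ carries zero $f'$-flow.

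For property 1, I would split the edges of $G$ crossing $\delta(S_0 \cup T_0', T_0'')$ into those with an $S_0$-endpoint and those lying entirely in $T_0$. The former already belong to the previously saturated cut $\delta(S_0, T_0)$, and since Algorithm \ref{alg:separate-exc-def} only modifies $f$ on edges inside $T_0$, their saturation status is unchanged. For edges of $G$ between $T_0'$ and $T_0''$, I would translate each saturation/zero-flow condition on residual edges of $G'$ back to $G$: a saturated forward residual edge from $T_0'$ to $T_0''$ means exactly that the underlying $G$-edge becomes saturated after the update, while a saturated backward residual edge in the same direction means the corresponding $G$-edge (which runs $T_0'' \to T_0'$) has its flow cancelled to zero. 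Taken together, these translations show that the new cut is saturated in both directions.

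For properties 2 and 3, the main tool is flow conservation of $f'$ at each $w \in T_0$ in $G'$. Since each unit of $f'$ on a residual edge of $G_f[T_0]$ shifts one unit of net in-flow from its tail to its head in $G$, summing over the residual edges incident to $w$ and invoking conservation of $f'$ yields the identity: the change in $w$'s net in-flow in $G$ equals $f'_{(w,t^*)} - f'_{(s^*,w)}$, where missing pseudo-source/sink edges contribute zero. For an excess node $w$, $f'_{(s^*,w)} \le c_{(s^*,w)} = \textsf{exc}_f(w)$ and $f'_{(w,t^*)}=0$, so the new net in-flow remains nonnegative and weakly smaller in magnitude; analogously the deficit of a deficit node does not grow, and balanced nodes remain balanced. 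Summing across all $w$ yields property 3. Property 2 follows because for any $w \in T_0''$ with original excess, the edge $(s^*, w)$ crosses $\delta(T_0', T_0'')$ and is therefore saturated by $f'$, forcing $f'_{(s^*,w)} = \textsf{exc}_f(w)$ and making the new net in-flow at $w$ exactly $0$.

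The main obstacle is the careful bookkeeping of the correspondence between residual edges of $G'$---which duplicate the forward and backward residual edges of $G_f[T_0]$---and the actual updates to $f$ on edges of $G$. One must cleanly distinguish when an $f'$-unit on a $G'$-edge corresponds to increasing forward flow on a $G$-edge versus cancelling existing flow on a reverse $G$-edge, and one must be careful with the capacity $\eta+1$ on $(t,t^*)$ to guarantee $t \in T_0''$. Once that correspondence is fixed, all three properties reduce cleanly to max-flow/min-cut duality on $G'$ combined with flow conservation of $f'$.
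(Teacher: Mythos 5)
Your proof is correct and follows essentially the same route as the paper's: decompose the new cut into the unchanged $S_0$-crossing edges plus the $T_0$-internal edges handled by the auxiliary min-cut $\delta(T_0',T_0'')$, and use max-flow/min-cut duality on $G'$ to rule out excess in $T_0''$ and to bound the excess/deficit changes. The only cosmetic difference is that the paper derives the no-excess-in-$T_0''$ property by contradiction via an augmenting $s^*$--$t^*$ path through an unsaturated $(s^*,u)$, whereas you invoke directly that $(s^*,u)$ crosses the min-cut and is therefore saturated --- two faces of the same duality argument.
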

\begin{proof}
Let $f_{\textsf{old}}$ denote the input to Algorithm \ref{alg:separate-exc-def}.

The fact that the output $f$ saturates the cut 
$\delta(S \cup T'_0, T''_0)$ immediately follows from the fact 
that $f'$ saturated the cut $\delta(T'_0, T''_0)$ in $G'$. Indeed, all edges from $T'_0$ to $T''_0$ are now saturated and all edges from $T''_0$ to $T'_0$ have no flow. All edges from $S$ to $T''_0$ are already saturated in the old flow $f_{\textsf{old}}$ and remain so after adding $f'$ since its flows are locally within $T_0$. For the same reason, all edges from $T''_0$ back to $S$ still have no flow.
 
Now, we consider the total excess and deficit.
First note that the nodes that have excess/deficit with respect to the 
updated pseudo-flow $f$ are a subset of the nodes 
that had excess/deficit with respect to $f_{\textsf{old}}$, and
the excess/deficit of a node is clearly never increased.

Assume for sake of contradiction that there is an excess node $u \in T_0''$.
Then $u$ had excess with respect to $f_{\textsf{old}}$ too, 
so there is an edge $(s^*,u)$ that had capacity $\textsf{exc}_{f_{\textsf{old}}}(u)$ in $G'$ but was not saturated by $f'$.
Further, since a min-cut in $G'$ is $\delta( T'_0, T''_0)$, 
it must be that $u $ can reach $t$ in $G'$. 
This means that in $G'$ there is a path with positive remaining capacity  between $s^*$ and $t^*$, 
contradicting the fact that $f'$ was a max-flow in $G'$.
\end{proof}

By Claim \ref{claim:help-excess-t}, 
the updated $f$ satisfies the conditions of Lemma \ref{lem: elim-exc-def} 
by taking $S^* = S_0 \cup T_0'$ and $T^* = T_0''$.
Lastly, the run-time claimed in Lemma \ref{lem: elim-exc-def} follows by
applying Corollary \ref{cor:cs_pr} on $G'$.
Putting everything together we have the following lemma.

\begin{lemma}
\label{lem: elim-exc-def}
Let $f$ be a pseudo-flow for network $G$ with error $\eta$ 
that saturates cut $\delta(S_0,T_0)$.
Algorithm \ref{alg:separate-exc-def} finds a new cut-saturating pseudo-flow in time $O(\eta \cdot n^2)$ so that the new pseudo-flow saturates an additional $s-t$ cut $\delta(S^*,T^*)$ that has no excess nodes in $T^*$, and the total excess and deficit is still bounded by $\eta$. 
\end{lemma}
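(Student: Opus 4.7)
The plan is to read off correctness directly from Claim \ref{claim:help-excess-t} and to derive the runtime from Corollary \ref{cor:cs_pr} applied to the auxiliary graph $G'$ built in Algorithm \ref{alg:separate-exc-def}.

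First I would handle correctness. Claim \ref{claim:help-excess-t} already establishes that the updated pseudo-flow $f$ saturates the cut $\delta(S_0 \cup T_0', T_0'')$, that every excess node lies in $S_0 \cup T_0'$ (so $T_0''$ contains no excess nodes), and that the total excess and deficit in $G$ did not increase. Setting $S^* = S_0 \cup T_0'$ and $T^* = T_0''$ therefore gives a cut-saturating pseudo-flow with no excess in $T^*$ and total excess plus deficit still bounded by $\eta$. The only thing to check for the new cut to be a genuine $s$-$t$ cut is that $s \in S^*$ and $t \in T^*$: since the original cut $\delta(S_0, T_0)$ is an $s$-$t$ cut we have $s \in S_0 \subseteq S^*$, and by construction the edge $(t, t^*)$ in $G'$ has capacity $\eta + 1$ which strictly exceeds the value of any max-flow from $s^*$ to $t^*$ in $G'$ (bounded by the total excess in $T_0$, which is at most $\eta$), so $(t, t^*)$ cannot be cut, forcing $t \in T_0'' = T^*$.

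For the runtime, the main observation is that the auxiliary graph $G'$ has a very small min-$(s^*,t^*)$ cut. The edges leaving $s^*$ have total capacity equal to the total excess in $T_0$, which by the error assumption is at most $\eta$. Hence the min-cut in $G'$ is at most $\eta$, and Corollary \ref{cor:cs_pr} applied to $G'$ (whose vertex count is $O(n)$) yields a max-flow $f'$ in time $O(\eta \cdot n^2)$. All remaining bookkeeping in Algorithm \ref{alg:separate-exc-def}---building $G_f[T_0]$, adding $s^*$ and $t^*$ with their incident edges, initializing $f^{\textsf{init}}$, and merging $f'$ back into $f$---is linear in the size of $G$, hence absorbed. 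Thus the total running time is $O(\eta \cdot n^2)$, as claimed.

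The main obstacle I anticipate is not in the cut/potential accounting itself but in making sure that the auxiliary max-flow $f'$ can be transported back to a legal update of $f$ on $G$: each unit of flow on an edge $(u,v)$ of $G'$ that is a copy of an edge in $G_f[T_0]$ must correspond to pushing one unit of admissible residual capacity in $G_f$, so that after the update $f$ still respects the capacities of $G$. Because $G'$ was built from a copy of $G_f[T_0]$ with only the super-source/sink augmentations, the capacity of each interior edge of $G'$ matches the residual capacity in $G_f$, and the flow $f'$ restricted to $G_f[T_0]$ is a valid circulation-like update that preserves capacity constraints in $G$. The effect on excess/deficit is exactly that a node $u \in T_0$ loses $f'_{(s^*,u)}$ units of excess and a node $v \in T_0$ loses $f'_{(v,t^*)}$ units of deficit, with flow through $(t, t^*)$ representing excess absorbed by the sink; this is precisely what Claim \ref{claim:help-excess-t} formalizes, so the correctness argument goes through cleanly.
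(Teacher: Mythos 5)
Your proposal is correct and follows essentially the same route as the paper: the paper likewise obtains correctness by invoking Claim \ref{claim:help-excess-t} with $S^* = S_0 \cup T_0'$ and $T^* = T_0''$, and bounds the running time by applying Corollary \ref{cor:cs_pr} to the auxiliary graph $G'$, whose min-cut is at most the total excess in $T_0 \leq \eta$. The extra checks you supply (that $t$ lands in $T^*$ because the $(t,t^*)$ edge of capacity $\eta+1$ cannot be cut, and that $f'$ transports back to a capacity-respecting update of $f$) are details the paper leaves implicit but are consistent with its argument.
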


\paragraph{Moving deficits to the $t$-side.} Next, we will do a similar procedure for the $s$-side of the cut, 
though this time we wish to remove deficit nodes. We will show that this is exactly the backward process of what happens to the $t$-side, and can be done by reversing the graph edges and flows and running Algorithm \ref{alg:separate-exc-def} on the reversed network.

  We will build the reverse network of $G$, call it $B$ (for backwards).
    The network $B$ consists of a copy of $G$ but all of the edges go the opposite direction. 
    More specifically, for every node $u \in V(G)$ there is a mirror node $u'$ in $B$, and for every edge $e=(u,v) \in E(G)$ with capacity $c_e$, there is a mirror edge $e'=(v', u') \in E(B)$ with capacity $c_e$.
    Note that the source $s$ in $G$ is mirrored to the sink $s'$ in $B$, whereas the sink $t$ in $G$ is mirrored to the source $t'$ in $B$.

    We can reverse any pseudo-flow $f$ on $G$ to be another pseudo-flow $f'$ on $B$, where for all $e \in E(G) $, $f'_{e'} = f_e$.
    Notably, $f$ and $f'$ saturate the same cut, and we observe
    $\textsf{exc}_{f}(u) = \textsf{def}_{f'}(u')$ and $\textsf{def}_{f}(u) = \textsf{exc}_{f'}(u').$

Suppose we have a pseudo-flow $f$ that saturates cut $\delta(S_0, T_0)$ in $G$ with no excess nodes in $T_0$.  
Then in the backwards network $B$, $f'$ saturates $\delta(T'_0, S'_0)$, where $T_0$ (resp. $S_0$) is all mirror nodes $p'$ for such $p \in T_0$ (resp. $S_0$). Now $S'_0$ becomes the sink-side of the cut.  
In $B$, we can send flow from excess nodes and $s'$ to deficit nodes within $S'_0$, and this can be done 
by running Algorithm \ref{alg:separate-exc-def} on $B$.

The true algorithm for $G$ is Algorithm \ref{alg:separate-exc-def_two}, 
which defer to Appendix \ref{sec:mirror}, since it is really just the mirror image of Algorithm \ref{alg:separate-exc-def}, 
though we may skip the execution of Algorithm \ref{alg:initialheights}, as we already know the cut.

This flow, when reversed back into $G$, is the maximum amount of flow that can go from excess nodes and $s$ to deficit nodes in $G_f[S_0]$. 
After adding this reversed flow to $f$, 
    the result is a cut-saturating pseudo-flow for $G$, where there is no deficit on the $s$-side of the cut.
    Observe that there is no excess or deficit created on either side of the cut in the process.

We obtain the following corollary of Lemma \ref{lem: elim-exc-def}.
\begin{corollary}
\label{cor: elim-s}
Let $f$ be a pseudo-flow for network $G$ with error $\eta$ 
that saturates cut $\delta(S_0,T_0)$.
One can update $f$ in time $O(\eta \cdot n^2)$ so that all flow in $T_0$ remains unchanged, but now $f$ saturates a cut $\delta(S^*,T^*)$ and there are no deficit nodes in $S^*$.
\end{corollary}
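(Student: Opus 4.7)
The plan is to reduce the statement to Lemma~\ref{lem: elim-exc-def} by passing to the backwards network. Construct $B$ from $G$ as described in the paragraph preceding the corollary, and let $f'$ be the pseudo-flow on $B$ obtained by reversing $f$, i.e.\ $f'_{e'} = f_e$ for every $e \in E(G)$. The mirroring preserves saturation of cuts, so $f'$ saturates $\delta(T'_0, S'_0)$ in $B$, where $T'_0$ and $S'_0$ are the mirror images of $T_0$ and $S_0$, and $S'_0$ is now the $t'$-side (sink-side, containing $s'$ which is the sink in $B$). Moreover, the identities $\textsf{exc}_{f}(u) = \textsf{def}_{f'}(u')$ and $\textsf{def}_{f}(u) = \textsf{exc}_{f'}(u')$ imply that the total excess plus deficit of $f'$ in $B$ is the same as for $f$ in $G$, and the residual graph $B_{f'}$ is obtained by reversing all edges of $G_f$, so the distance-to-sink-style arguments transfer. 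Hence $f'$ has error at most $\eta$ with respect to the cut $\delta(T'_0, S'_0)$ in $B$.

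Apply Lemma~\ref{lem: elim-exc-def} (i.e.\ Algorithm~\ref{alg:separate-exc-def}) to $B$ and $f'$. In time $O(\eta \cdot n^2)$, this produces an updated pseudo-flow $\tilde{f'}$ on $B$ that saturates an additional cut $\delta(\tilde S, \tilde T)$ in $B$, such that no node in $\tilde T$ has excess with respect to $\tilde f'$, and the total excess and deficit is still bounded by $\eta$. Crucially, the auxiliary graph $G'$ in Algorithm~\ref{alg:separate-exc-def} is built on a copy of $G_{f'}[S'_0]$, so the only edges whose flow value changes when going from $f'$ to $\tilde f'$ are edges with both endpoints in $S'_0$. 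In particular, $\tilde f'$ agrees with $f'$ on every edge whose image in $G$ lies inside $T_0$.

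Now reverse back: define $\tilde f$ on $G$ by $\tilde f_e = \tilde f'_{e'}$. Then $\tilde f$ agrees with $f$ on every edge inside $T_0$, so the first requirement of the corollary is met. Let $S^*$ and $T^*$ be the preimages in $G$ of $\tilde T$ and $\tilde S$ respectively (note the swap: the sink-side in $B$ becomes the source-side in $G$). Because reversing preserves saturation, $\tilde f$ saturates $\delta(S^*, T^*)$ in $G$; and because $\textsf{exc}_{\tilde f'}(u') = \textsf{def}_{\tilde f}(u)$, the fact that no node of $\tilde T$ has excess under $\tilde f'$ translates to the fact that no node of $S^*$ has deficit under $\tilde f$. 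The running time is inherited directly from Lemma~\ref{lem: elim-exc-def}.

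The only part that requires care, and the place where I expect the main (minor) obstacle to lie, is the bookkeeping of which side of the cut plays the source-role and which plays the sink-role after mirroring, and verifying that ``excess on the sink-side'' in $B$ indeed corresponds to ``deficit on the source-side'' in $G$. Once the correspondences $s \leftrightarrow t'$, $t \leftrightarrow s'$, and excess $\leftrightarrow$ deficit are made explicit, every clause of the corollary is immediate from Lemma~\ref{lem: elim-exc-def} applied to $(B, f')$.
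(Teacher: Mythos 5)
Your proposal is correct and follows essentially the same route as the paper: build the backwards network $B$, reverse the pseudo-flow so excess and deficit (and source and sink) swap roles, apply Lemma~\ref{lem: elim-exc-def} on the mirrored sink-side $S'_0$, and reverse back. The one bookkeeping point you flag is handled in the paper by a mirror algorithm (Algorithm~\ref{alg:separate-exc-def_two}) that takes the known cut $\delta(S_0,T_0)$ as input rather than recomputing the sink-side via Algorithm~\ref{alg:initialheights}, but this does not change the substance of your argument.
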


\subsection{From min-cut to max-flow}

Summarizing this section, we prove our main theorem.
\begin{proof}[Proof of Theorem \ref{thm:main}]
Given a predicted pseudo-flow $\widehat{f}$ with error $\eta$ on network $G$, 
Lemma \ref{lem:fixing_cut} proved that Algorithm \ref{alg:pre-process} 
finds a cut-saturating pseudo-flow $f$ for $G$ 
with error $ \eta$ in time $O(\eta \cdot n^2)$.
To find a min-cut, 
Lemma \ref{lem:no-excess-T} shows that it is enough to find a pseudo-flow saturating a cut so that the $t$-side of the cut contains no excess and the $s$-side of the cut contains no deficit.

We run Algorithm \ref{alg:separate-exc-def} seeded with $f$ on $G$
to obtain an updated cut-saturating pseudo-flow with no excess on the $t$-side of the maintained cut by Lemma \ref{lem: elim-exc-def}.
Then, Algorithm \ref{alg:separate-exc-def} can be run on the backwards network $B$, 
and from Corollary \ref{cor: elim-s}, the updated 
cut-saturating pseudo-flow now has no excess on the $t$-side of the cut and no deficit on the $s$-side.

The last phase of the algorithm can be left out if only the min-cut is desired; suppose the min-cut is $\delta(S, T)$.
% , for example in applications such as image segmentation. 
% However, we still show this for completeness of the algorithm. 
By the proof of Lemma \ref{lem:no-excess-T}, to obtain a max-flow we only need to send all excess flow back to $s$, and send flow from $t$ to every deficit node. Label all nodes in $S$ with height $n$ and all nodes in $T$ with height $0$.
Then run Algorithm \ref{alg:ws_pr} to fix all excess in $S$. 
The algorithm will only send flow back to $s$, since there is no way to cross the cut $\delta(S, T)$. Then reverse the graph and flow, and again run Algorithm \ref{alg:ws_pr} to fix the excess nodes in the reversed graph, which exactly correspond to the deficit nodes in the original graph.
\end{proof}

\section{Empirical Results}
\label{sec: empirical}
In this section, we validate the theoretical results in Sections \ref{sec:ws_pseudoflow}. To demonstrate the effectiveness of our methods, we consider \emph{image segmentation}, a core problem in computer vision that aims at separating an object from the background in a given image. 
It is common practice to re-formulate image segmentation as a max-flow/min-cut optimization problem (see for example \cite{boykov2001interactive, boykov2004experimental, boykov2006graph}), and solve it with combinatorial graph-cut algorithms.

The experiment design we adopt largely resembles that in \cite{davies23b}, which studied warm-starting the Ford-Fulkerson algorithm for max-flow/min-cut. As in previous work, we do not seek state-of-the-art running time results 
for image segmentation.
Our goal is to show that on real-world networks, warm-starting can lead to significant run-time improvements for the Push-Relabel min-cut algorithm, which claims stronger theoretical worst-case guarantees and empirical performance than the Ford-Fulkerson procedures. % in \cite{davies23b}. 
We highlight the following:

\begin{itemize}
    \item 
Our implementation of cold-start Push-Relabel is much faster than Ford-Fulkerson on these graph instances, enabling us to explore the effects of warm-starting on larger image instances. This improved efficiency results from implementing the gap labeling and global labeling heuristics, both known to boost Push-Relabel's performance in practice.
\item As we increase the number of image pixels (i.e., the image's resolution), the size of the constructed graph increases and the savings in time becomes more significant.
\item Implementation choices (such as how to learn the seed-flow from historical graph instances and their solutions) that make the predicted pseudo-flow cut-saturating and that reroute excesses and deficits are crucial to the efficiency of warm-starting Push-Relabel.
\end{itemize}

\begin{figure}
        \centering
        \begin{subfigure}[b]{0.20\linewidth}
            \centering
            \includegraphics[width=\linewidth]{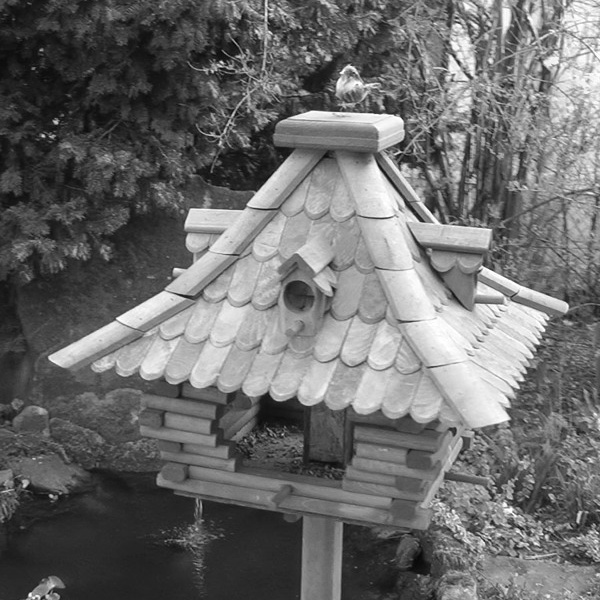}
            \caption[]%
            {{\small Birdhouse}}    
            \label{fig:birdhouse_c}
        \end{subfigure}
        \hfill
        \begin{subfigure}[b]{0.20\linewidth}  
            \centering 
            \includegraphics[width=\linewidth]{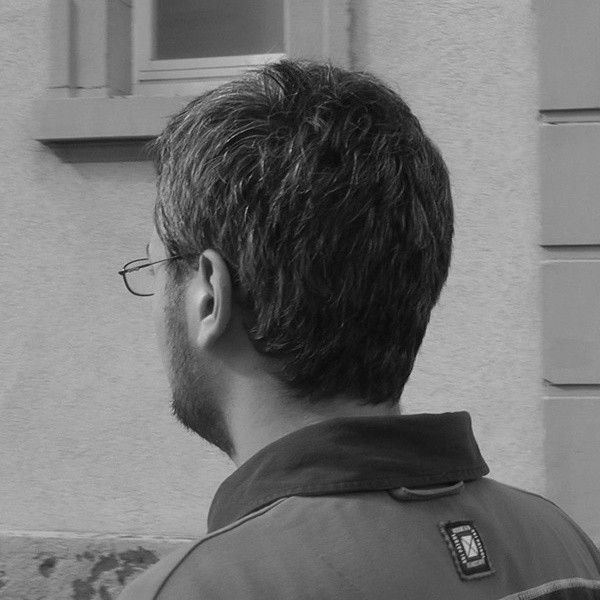}
            \caption[]%
            {{\small Head}}    
            \label{fig:head_c}
        \end{subfigure}
        \hfill
        %\vskip\baselineskip
        \begin{subfigure}[b]{0.20\linewidth}   
            \centering 
            \includegraphics[width=\linewidth]{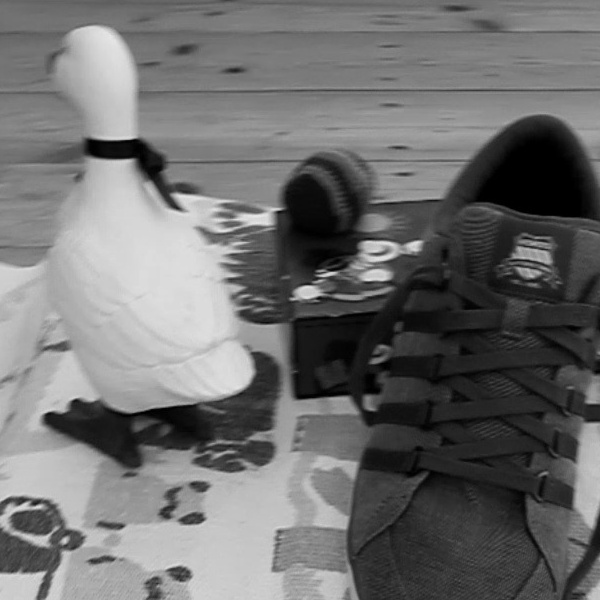}
            \caption[]%
            {{\small Shoe}}    
            \label{fig:shoe_c}
        \end{subfigure}
        \hfill
        \begin{subfigure}[b]{0.20\linewidth}   
            \centering 
            \includegraphics[width=\linewidth]{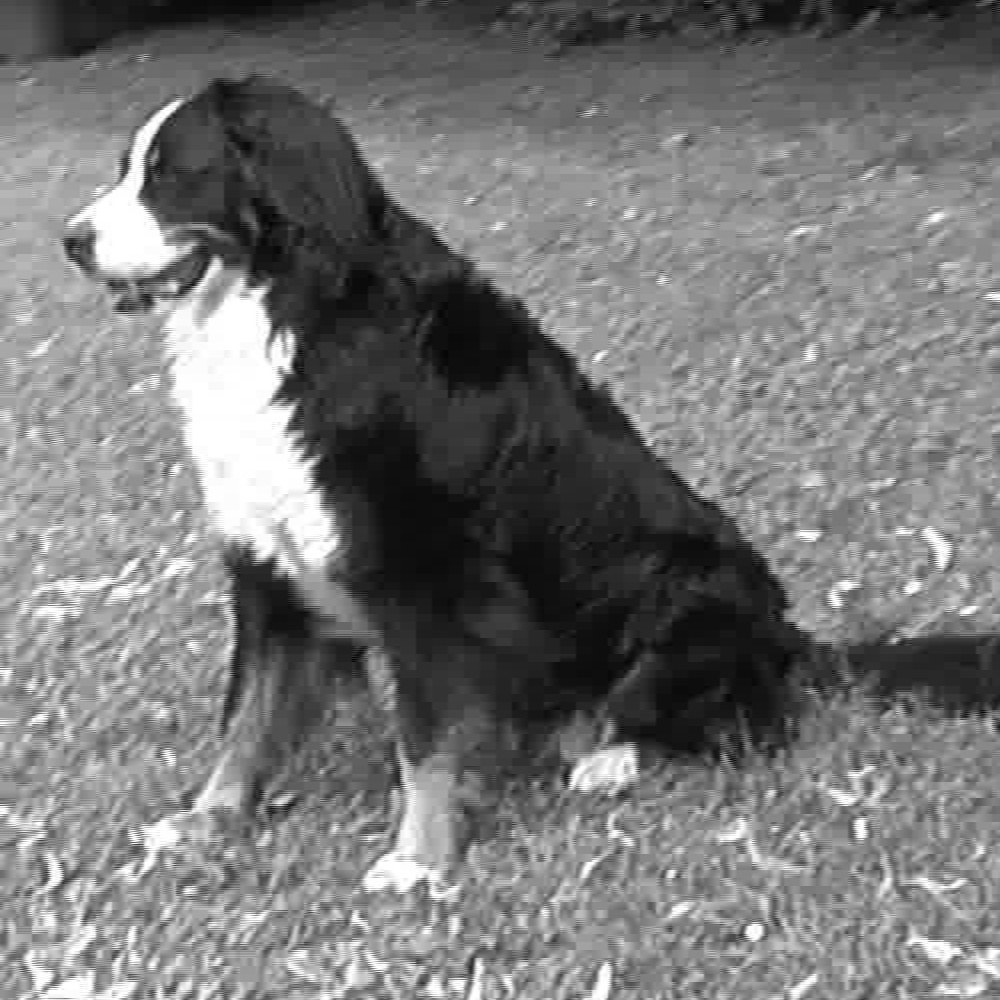}
            \caption[]%
            {{\small Dog}}    
            \label{fig:dog_c}
        \end{subfigure}
        \caption{\small The cropped and gray-scaled images from Figure \ref{fig:original_images} (copy from Figure 2 in \cite{davies23b}).} 
        \label{fig:cropped_images}
% \vspace{-0.2in}
\end{figure}

\paragraph{Datasets and data prepossessing}
Our image groups are from the \emph{Pattern Recognition and Image Processing} dataset from the University of Freiburg, and are titled \textsc{Birdhouse}, \textsc{Head}, \textsc{Shoe}, and \textsc{Dog}.  The first three groups are .jpg images from the \emph{Image Sequences}\footnote{https://lmb.informatik.uni-freiburg.de/resources/datasets/sequences.en.html} dataset. The last group, \textsc{Dog}, was a video that we converted to a sequence of .jpg images from the \emph{Stereo Ego-Motion}\footnote{https://lmb.informatik.uni-freiburg.de/resources/datasets/StereoEgomotion.en.html} dataset. 

Each of the image groups consists of a sequence of photos of an object and its background. 
There are slight variations between consecutive images in a sequence, which are the result of the object and background's relative movements or a change in the camera's position.
These changes alter the solution to the image segmentation problem, 
but the effects should be minor when the change between consecutive images is minor.
In other words, we expect an optimal flow and cut found on an image in a sequence to be a good prediction for the next image in the sequence.

From each group, we consider 10 images and crop them to be either $600 \times 600$ or $500 \times 500$ pixel images, still containing the object, and 
gray-scale all images.
We rescale the cropped, gray-scaled images to be $N \times N$ pixels to produce different sized datasets.
Experiments are performed for $N \in \{30, 60, 120, 240, 480\}$. 
In the constructed graph, we have $|V| = N^2 + 2$. Every graph is sparse, with $|E| = O(|V|)$, hence both $|V|$ and $|E|$ grow as $O(N^2)$.
Detailed description of raw data and example original images can be found in Appendix \ref{sec:exp_app} (Table \ref{table:data_desc}, Figure \ref{fig:original_images}).

\paragraph{Graph construction}
As in \cite{davies23b}, we formulate image segmentation as a max-flow/min-cut problem. The construction of the network flow problem applied in both our work and theirs is derived from a long-established line of work on graph-based image segmentation; see \cite{boykov2006graph}. The construction takes pixels in images to be nodes; and a penalty function value which evaluates the contrast between the pigment of any neighboring pixels to be edge capacity. We leave details on translating the images to graphs on which we solve max-flow/min-cut to Appendix \ref{sec:exp_app}.

\paragraph{Implementation details in warm-start Push-Relabel}
Throughout the experiments,
whenever the Push-Relabel subroutine is called on any auxiliary graph, it is implemented with the gap relabeling heuristic, as shown in Algorithm \ref{alg:ws_pr}, and the \emph{global relabeling} heuristic, which occasionally updates the heights to be 
a node's distance from $t$ in the residual graph.
These heuristics are known to improve the performance of Push-Relabel. As a tie-breaker for choosing the next active node to push from, we choose the one with highest height, which is known to improve the running time of Push-Relabel. We found the generic Push-Relabel algorithm without these heuristics to be slower than Ford-Fulkerson.

All images from the same sequence share the same seed sets.
The constructed graphs are on the same sets of nodes and edges, but the capacities on the edges are different.
The first image in the sequence is solved from scratch. 
For the second image in the sequence, we reuse the old optimal flow and cut from the first image one, 
then for the $i^{th}$ image in the sequence, we reuse the optimal flow and cut from the ${i-1}^{st}$ image. We reuse the old max-flow on the new network by rounding down the flow on edges whose capacity has decreased, hence producing excesses and deficits, and pass this network and flow to the warm-start Push-Relabel algorithm in Section \ref{sec:ws_pseudoflow}. 

\begin{figure}
        \centering
        \begin{subfigure}[b]{0.2\linewidth}
            \centering
            \includegraphics[width=\linewidth]{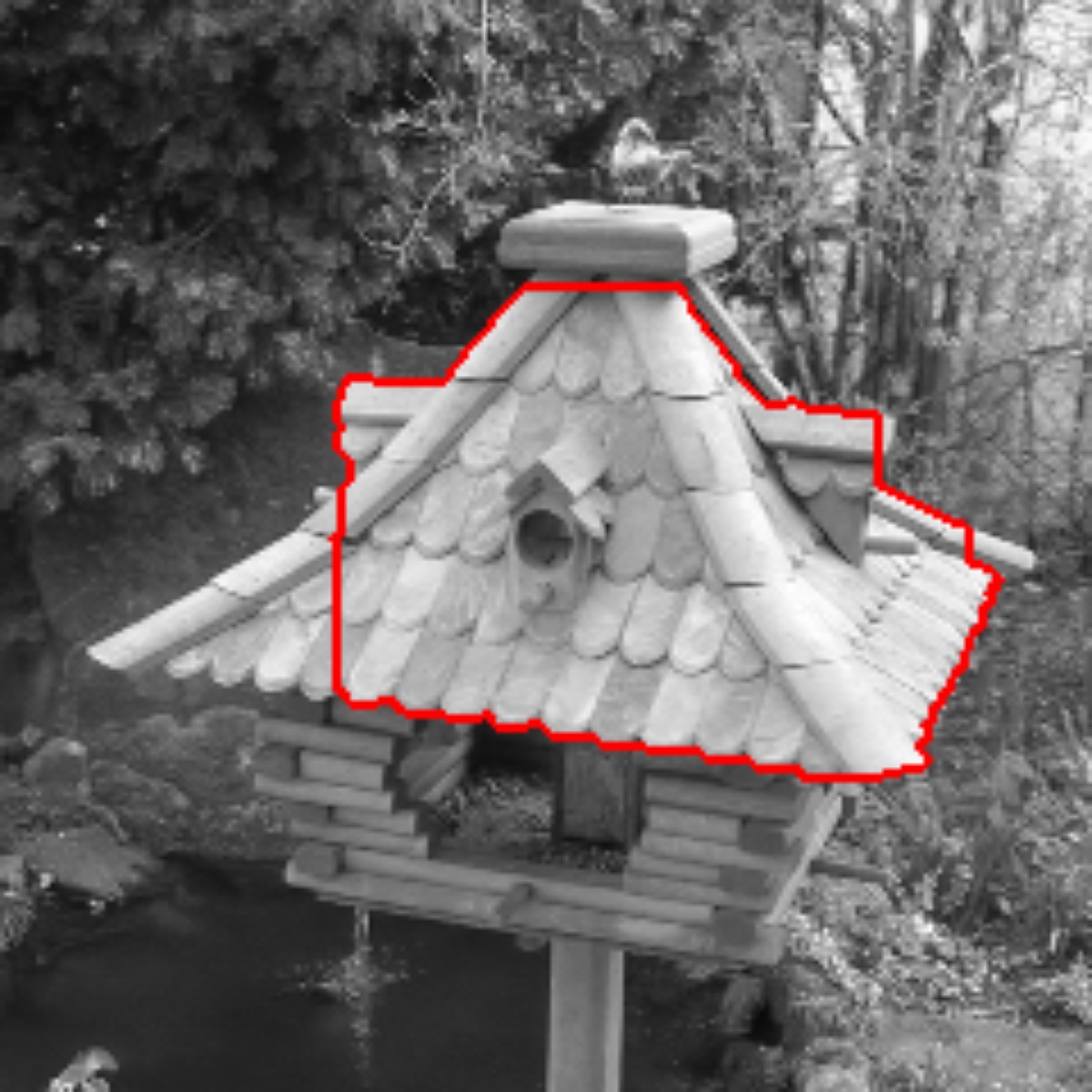}  
            \label{fig:bhcut1}
        \end{subfigure}
        \hspace{1cm}
        \begin{subfigure}[b]{0.2\linewidth}  
            \centering 
            \includegraphics[width=\linewidth]{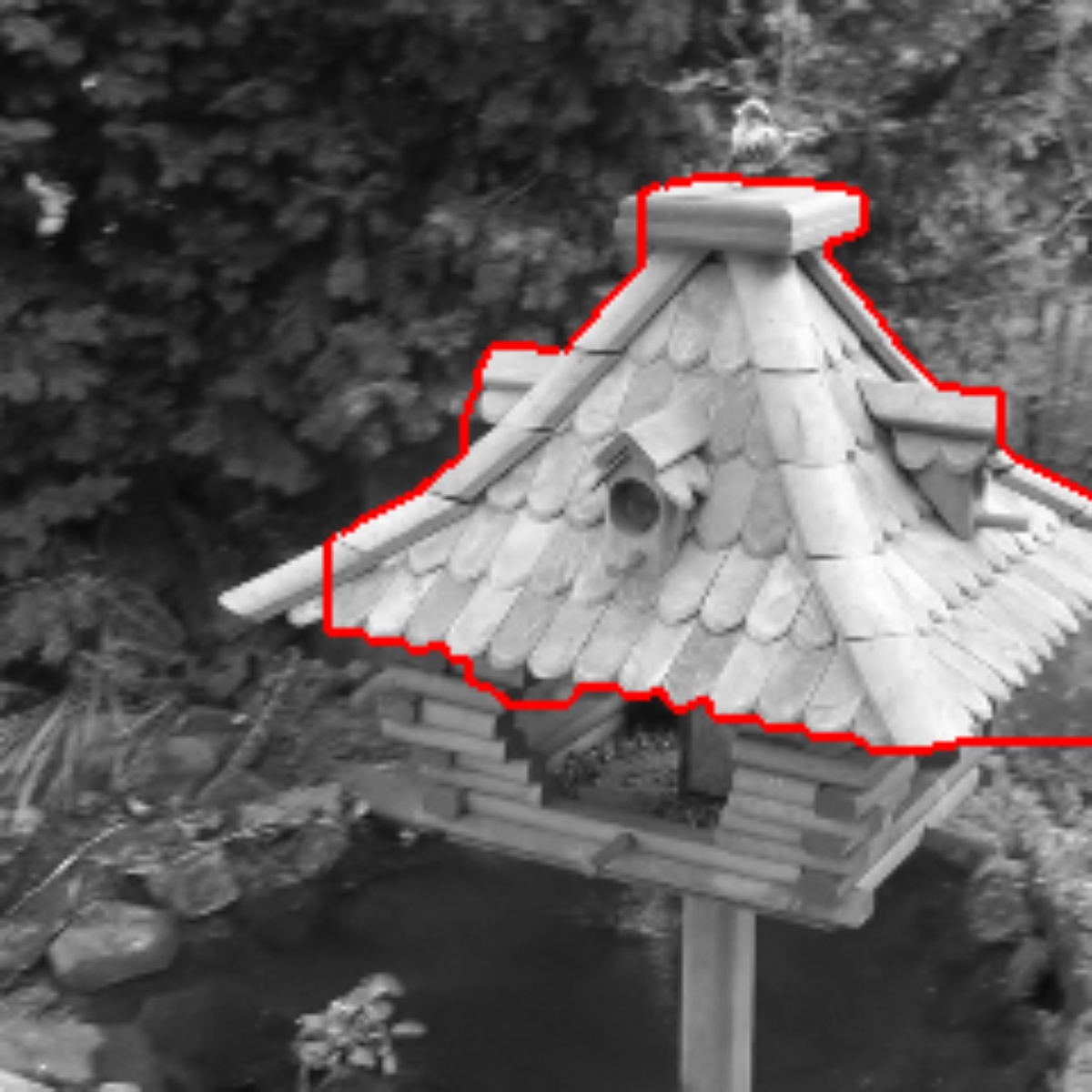}   
            \label{fig:bhcut2}
        \end{subfigure}
       \hspace{1cm}
            \begin{subfigure}[b]{0.2\linewidth}  
            \centering 
            \includegraphics[width=\linewidth]{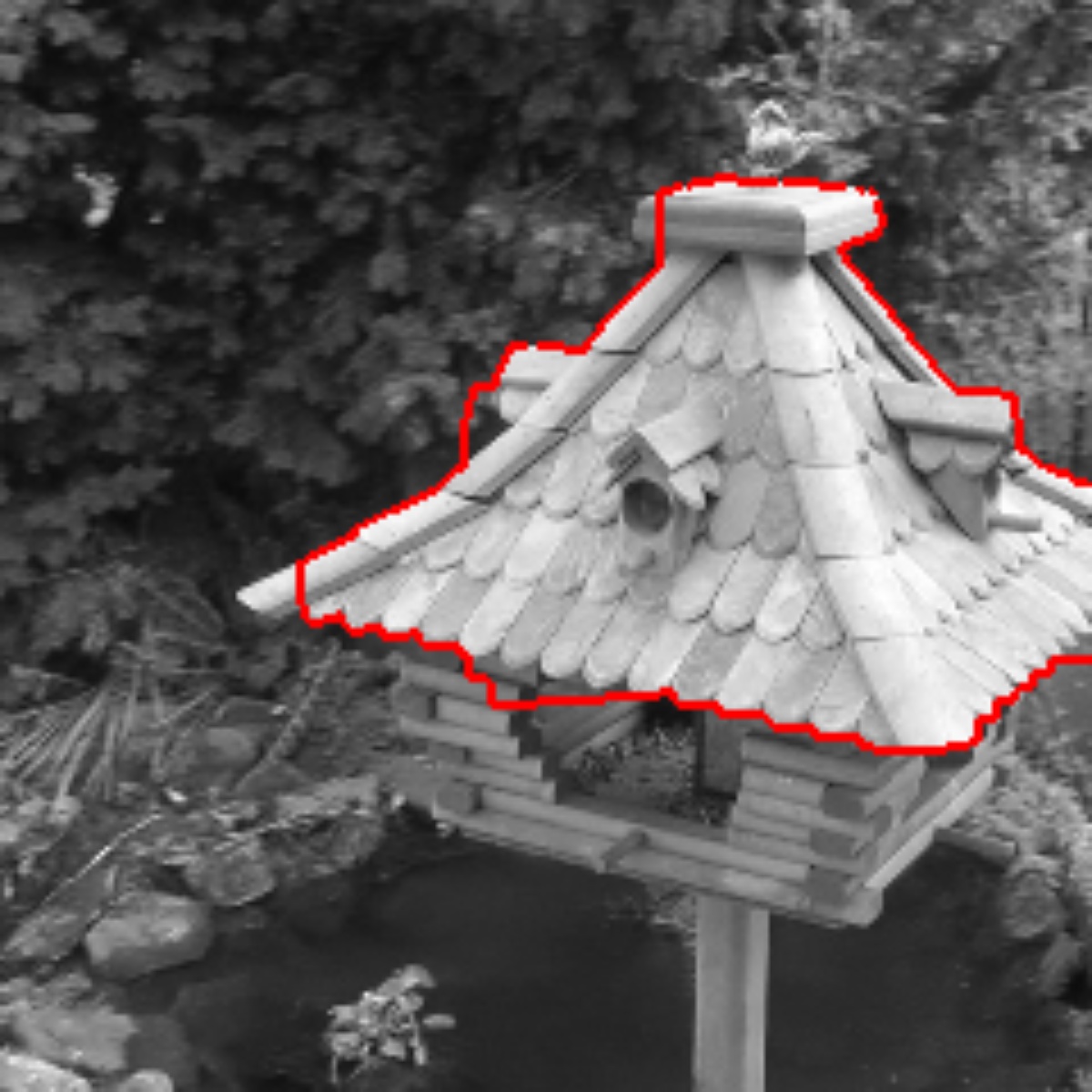}   
            \label{fig:bhcut3}
        \end{subfigure}
        \caption{\small  Cuts (red) on images chronologically evolving from the $240 \times 240$ pixel images from \textsc{Birdhouse}.} 
        \label{fig:bh_cuts}
\end{figure}

To find a saturating cut, instead of sending flow from $s$ to $t$ as suggested in Algorithm \ref{alg:pre-process}, we reuse the min-cut on the previous image $\delta(S_0, T_0)$ and send flow from $S_0$ to $T_0$ that originates from either $s$ or an excess node, and ends at either $t$ or a deficit node. We experimented with a few different ways of projecting the old flow to a cut-saturating one on the new graph. The way we implemented was by far the most effective, although it shares the same theoretical run-time as Algorithm \ref{alg:pre-process}. 

The graph-based image segmentation method finds reasonable object/background boundaries. Figure \ref{fig:bh_cuts} shows an example of how the target cut could evolve as the image sequence proceeds. Even with the same set of seeds, the subtle difference in images could lead to different min-cuts that need to be rectified. However, the hope is that the old min-cut bears much resemblance to the new one, hence warm-starting Push-Relabel with it could be beneficial. See Appendix \ref{sec:exp_app} for other examples.

\noindent \textbf{Results}\quad Table \ref{table:running_time_ff_vs_pr_partial} shows average running times for both Ford-Fulkerson in \cite{davies23b} and Push-Relabel for two image sizes: $120 \times 120$ (the largest size tested in prior work) and $480 \times 480$. The full data on all data sizes are in Table \ref{table:running_time_ff_vs_pr_full} in Appendix \ref{sec:exp_app}. Table \ref{table:running_time_scale} shows how the run-time of warm-start Ford-Fulkerson and Push-Relabel scales with growing image sizes on image group \textsc{Dog}. The ``N/A'' in both tables marks overly long run-time ($>$1 hour), at which point we stop evaluating the exact run-time. The run-time growth across data sizes on other image groups can be found in Appendix \ref{sec:exp_app}.

\begin{table}[ht]
\centering
\caption{Average run-times (s) of cold-/warm-start Ford Fulkerson (FF) and Push-Relabel (PR)}
\label{table:running_time_ff_vs_pr_partial}
\begin{tabular}{r|ccccl}
\hline
    Image Group & FF cold-start & FF warm-start  &  PR cold-start & PR warm-start  \\ 
\hline
\textsc{Birdhouse} $120\times120$ & 109.06  &  37.31 &  5.42 & 4.98 \\
    \textsc{Head} $120\times120$& 101.79 & 28.43  & 5.90& 5.92 \\
    \textsc{Shoe} $120\times120$& 98.95 & 30.44  & 6.44 & 3.74\\
    \textsc{Dog} $120\times120$ & 190.36 & 38.08   & 6.76 & 6.38 \\ 
\hline
\hline
\textsc{Birdhouse} $480\times480$ & N/A  &  N/A &  604.54 & 502.58 \\
\textsc{Head} $480\times480$ & N/A  &  N/A &  365.25 & 285.75 \\
\textsc{Shoe} $480\times480$ & N/A  &  N/A &  756.77 & 364.42 \\
\textsc{Dog} $480\times480$ & N/A  &  N/A &  834.63 & 363.41 \\
\hline
\end{tabular}
\end{table}

\begin{table}[h!]
\centering
\caption{Growth of average running times of warm-start Ford Fulkerson (FF) and Push-Relabel (PR) in seconds, on image group \textsc{Dog}}
\label{table:running_time_scale}
\begin{tabular}{r|ccccc}
\hline
    Algorithm & $30\times30$ & $60\times60$ & $120 \times 120$ & $240 \times 240$ & $480 \times 480$ \\
\hline
    Ford-Fulkerson & 0.41 & 6.89 & 42.04 & 459.48 & NA \\
    Push-Relabel & 0.11 & 0.95 & 6.38 & 52.42 & 363.41 \\
\hline
\end{tabular}
\end{table}

These results show warm-starting Push-Relabel, while slightly losing in efficiency on small images, greatly improves in it on large ones. As for the scaling of run-time with growing data sizes, both cold- and warm- start's running time increases polynomially with the image width $n$, but warm-start scales better, and as $n$ increases to $480$, it gains a significant advantage over cold-start. Despite the different warm-start theoretical bounds ($O(\eta |V|^2)$ for Push-Relabel versus $O(\eta |E|)$ for Ford-Fulkerson), in practice both warm-start algorithms scale similarly as the dataset size grows, as shown in Table \ref{table:running_time_scale}.

Table \ref{table:pr_break_down} shows how the running time of warm-start Push-Relabel breaks down into the three phases described in Section \ref{sec:ws_pseudoflow}: (1) finding a cut-saturating pseudo-flow; (2) fixing excess on $t$-side; (3) fixing deficits on $s$-side. Note phase (1) takes the most time, but results in a high-quality pseudo-flow, in that it takes little time to fix the excess/deficits appearing on the ``wrong'' side of the cut. 

% \vspace{-0.1in}
\begin{table}[ht]
\centering
\caption{Running time of warm-start Push-Relabel break down, on \textsc{Birdhouse}}
\label{table:pr_break_down}
% \tiny
\begin{tabular}{r|cccccl}
\hline
    Size & $30 \times 30$ & $60 \times 60$ & $120\times120$ & $240 \times 240$ & $480 \times 480$ \\ 
\hline
    Total & 0.06 & 0.45 & 4.98 & 55.68 & 502.58 \\ 
    Saturating cut & 0.04 & 0.34 & 4.17 & 46.25 & 431.49\\
    Fixing $t$ excesses & 0.01 & 0.09 & 0.53 & 5.29 & 64.01\\
    Fixing $s$ deficits & 0.01 & 0.02 & 0.27 & 4.13 & 7.08\\
\hline
\end{tabular}
% \vspace{-0.1in}
\end{table}

\section{Conclusions}
We provide the first theoretical guarantees on warm-starting Push-Relabel with a predicted flow, improving the run-time from $O(m \cdot n^2)$ to $O(\eta \cdot n^2)$.  Our algorithm uses one the most well-known heuristics in practice, the gap relabeling heuristic, to keep track of cuts in a way that allows for provable run-time improvements.

One direction of future work is extending the approaches in this work to generalizations of $s$-$t$ flow problems, for instance, tackling minimum cost flow or multi-commodity flow. An ambitious goal of such an agenda would be to develop new warm-start methods for solving arbitrary linear programs.  

A different line of work is to develop rigorous guarantees for other empirically proven heuristics by analyzing them through a lens of predictions, providing new theoretical insights and developing new algorithms for fundamental problems.

%%%%%%%%%%%%%%%%%%%%%%%%%%%%%%%%%%%%%%%%%%%%%%%%%%%%%%%%%%%%%%%%%%%%%%%%%%%%%%%
%%%%%%%%%%%%%%%%%%%%%%%%%%%%%%%%%%%%%%%%%%%%%%%%%%%%%%%%%%%%%%%%%%%%%%%%%%%%%%%
% APPENDIX
%%%%%%%%%%%%%%%%%%%%%%%%%%%%%%%%%%%%%%%%%%%%%%%%%%%%%%%%%%%%%%%%%%%%%%%%%%%%%%%
%%%%%%%%%%%%%%%%%%%%%%%%%%%%%%%%%%%%%%%%%%%%%%%%%%%%%%%%%%%%%%%%%%%%%%%%%%%%%%%
\newpage 

\printbibliography

\newpage
\appendix
\onecolumn

\section{More Discussion on Warm-starting Push-Relabel}
We include some more insights and details on warm-starting Push-Relabel.

\subsection{Tackling Unknown $\eta$ Value}\label{sec:predict-cut}

 Notice that, Algorithm \ref{alg:pre-process} directly uses the error value $\eta$, and treats it as given input. Due to the definition of $\eta$, it covers both the total excess/deficit and how far the current flow $\hat{f}$ is from being cut-saturating. The former is easy to measure; whereas the latter is not, since there are numerous cuts in $G$ and it is not obvious which one is closest to being saturated by $\hat{f}$. However, this challenge can be tackled by running the algorithm in the binary-search fashion. 
Say there is an $\eta^*$ which is the true error from being cut-saturating. We cannot know this value for sure because we cannot compute the value on each possible cut. One can start with some very small value of $\eta$ (such as $1$), put it on the edge $(s^*, s)$, and try to use Push-Relabel to send the flow from $s$ to $t$. If we successfully send the current $\eta$ from $s$ to $t$, there exists a $s-t$ flow of $\eta$ in the residual graph; meaning when Push-Relabel terminates the $s^*-t$ cut we will find is just the edge $\eta$. If this is the case, double $\eta$, saturate $(s^*, s)$, and again use Push-Relabel to send the extra $\eta$ downstream to $t$. Repeat this until the returned cut is not $(s^*, s)$. It will then be a $s-t$ cut. Intuitively, the role $\eta$ plays here is just a surplus of flow provided to the source $s$; hence it should not be bounding the flow-sending. Otherwise it means the current pseudo-flow is not yet cut-saturating. This gives the same run-time bound as each time we double $\eta$, the excess to resolve only increases by $\eta$; hence the total excess we have resolved throughout all iterations is still $O(\eta^*)$.

It is noteworthy that in experiments, we initialize $\eta$ to be the error computed on the old min-cut on the previous image. While this cut is not necessarily the one that bounds $\eta$, we found it to be an effective surrogate value for the real underlying $\eta$.

Notably, one can also run Algorithm \ref{alg:ws_pr} and terminate it upon finding the min-cut, in which case $f'$ will be a pre-flow on $G_{\hat{f}}$, and the resulting $f=f' + \hat{f}$ will have total excess bounded by $2\eta$. In fact, one can do this in other steps of the algorithm as well, if the goal is only to find a min-cut, 
and only lose an additional constant factor in the running time; see Appendix \ref{sec:early-termination}.
As discussed, in practice one may wish to use a predicted cut instead of finding a cut-saturating pseudo-flow as in Algorithm \ref{alg:pre-process}.

By Definition \ref{def:error}, if a pseudo-flow $\hat{f}$ is $\sigma$ far from cut-saturating it means augmenting it by another flow $f$ with value at most $\sigma$ can saturate some cut. Let this cut be $\delta(S,T)$. Another way to look at this is, within $\hat{f}$, the total flow passing through the cut $\delta(S,T)$ satisfies:
$$\sum_{u \in S, v \in T}\hat{f}_{(u,v)} - \sum_{u \in S, v \in T} \hat{f}_{(v,u)} \geq \sum_{u \in S, v \in T}c_{(u,v)} - \sum_{u \in S, v \in T}c_{(v,u)} + \sigma.$$

Apart from solving max-flow in the residual graph to saturate this cut, there may be other options to create a cut-saturating pseudo-flow. For example, the $\eta$ bound on error does not directly tell us where this cut is. However, if a practitioner can ``guess'' a good enough cut $\delta(S,T)$ from past problem instances, such a pseudo-flow can also be obtained simply by saturating all edges $(u, v) \in \delta(S,T)$ and removing the flow on all backward edges. The downside is that such a practice will transfer the error on that particular cut to the total excess and deficit on nodes incident to the cut. Overall, there may be a trade-off where one can omit Algorithm \ref{alg:pre-process} in lieu of using a predicted cut, but at the cost of having to fix more excess and deficit in later steps.

\subsection{The mirror algorithm}\label{sec:mirror}

\begin{algorithm}[h!] 
\caption{Moving all deficit to the $t$-side of the cut} \label{alg:separate-exc-def_two}
\begin{algorithmic}
\State  {\bfseries Input}: Network $G$, a pseudo-flow $f$ saturating cut $\delta(S_0, T_0)$
\State  Build the residual $G_{f}$.
    \State  Build $G'$ on copy of $G_{f}[S_0]$ plus $\{s^*,t^*\}$
    \For{excess node $u \in S_0$ } 
    \State  Add edge $(s^*,u)$ with capacity $\textsf{exc}_f(u)$
    \EndFor
    \For{deficit node $v \in S_0 \setminus \{s\}$ } 
    \State  Add edge $(v,t^*)$ with capacity $\textsf{def}_f(u)$
    \EndFor
        \State  Add edge $(s^*, s)$ with capacity $\eta+1$ (or sufficiently large capacity)
     \State Let $f^{\textsf{init}}_{(s^*,u)}=c_{(s^*,u)}$ for all $(s^*,u)$ and $f^{\textsf{init}}_{(s^*,s)}=c_{(s^*,s)}$, and all other $f^{\textsf{init}}_e=0$ 
\State  Run Algorithm \ref{alg:ws_pr} on $G'$ and $f^{\textsf{init}}$, outputs $f'$ and $S_0'$, $S_0''$
\For{all copies of $e=(u,v) \in E(G_f)$ where $f'_e>0$}
    \State  Update $f_e \leftarrow f_e+f'_e$
\EndFor
    \State  {\bfseries Output}: Flow $f$ and cut parts $S_0'$ and $S_0'' \cup T_0$
\end{algorithmic}
\end{algorithm}

\subsection{Early termination of auxiliary Push-Relabel upon finding min-cut}\label{sec:early-termination}
In Section \ref{sec:ws_pseudoflow}, we mentioned that one can choose to quit the Push-Relabel algorithm on auxiliary graphs whenever a cut is found. The resulting pseudo-flow, although violating flow conservation constraints, can still be added to the initial pseudo-flow. We give a brief analysis of how this effects the execution of the algorithm.

The pseudo-flow is constructed in three places: 
\begin{enumerate}
\item In Algorithm \ref{alg:pre-process}, where we saturate a cut; 
\item In Algorithm \ref{alg:separate-exc-def}, where we push flow from $t$-side excess nodes to deficit nodes and $t$;
\item In Algorithm \ref{alg:separate-exc-def_two}, where we push flow from $s$-side excess nodes and $s$ to deficit nodes.
\end{enumerate}

Notice this simple fact:
\begin{claim}
For pseudo-flows $f, f', \text{and } f''$ where $f = f' + f''$ (without violating capacities constraints), we have:
$$\sum_{p}(\textsf{exc}_{f}(u) + \textsf{def}_{f}(u)) \leq \sum_{u}(\textsf{exc}_{f'}(u) + \textsf{def}_{f'}(u)) + \sum_{u}(\textsf{exc}_{f''}(u) + \textsf{def}_{f''}(u))$$
\end{claim}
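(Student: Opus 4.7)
The plan is to reduce the claim to the scalar triangle inequality applied pointwise at each node. For any pseudo-flow $g$ and node $u$, define the signed net inflow $\mathrm{net}_g(u) = \sum_{(v,u)\in E} g_e - \sum_{(u,w)\in E} g_e$. By the definitions given in the preliminaries, exactly one of $\textsf{exc}_g(u)$, $\textsf{def}_g(u)$ is positive (the other is zero), and together they satisfy
\[
\textsf{exc}_g(u) + \textsf{def}_g(u) = |\mathrm{net}_g(u)|.
\]
This is the key observation that turns the sum on both sides of the claimed inequality into a sum of absolute values.

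Next, I would use the fact that the net-flow functional is linear in the flow variable. Since $f = f' + f''$ edge-wise and there is no violation of capacity (so all three are legitimate pseudo-flows on the same network), linearity immediately gives
\[
\mathrm{net}_f(u) = \mathrm{net}_{f'}(u) + \mathrm{net}_{f''}(u).
\]
Then the scalar triangle inequality yields $|\mathrm{net}_f(u)| \le |\mathrm{net}_{f'}(u)| + |\mathrm{net}_{f''}(u)|$ at every $u$.

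Finally, I would sum this pointwise inequality over all $u$ (with $s,t$ included or excluded as in the statement) and rewrite each absolute value back in the $\textsf{exc} + \textsf{def}$ form using the observation above. That gives precisely the claimed inequality. There is no real obstacle here; the only mild subtlety is making sure the claim's pseudo-flow hypothesis on $f = f' + f''$ (capacity feasibility) is what justifies using the same node-wise formulas for $\textsf{exc}$ and $\textsf{def}$, so that the identity $\textsf{exc}_g(u) + \textsf{def}_g(u) = |\mathrm{net}_g(u)|$ applies uniformly to $f$, $f'$, and $f''$.
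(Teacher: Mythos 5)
Your proof is correct: the identity $\textsf{exc}_g(u)+\textsf{def}_g(u)=|\mathrm{net}_g(u)|$, linearity of the net inflow in the flow, and the scalar triangle inequality summed over nodes is exactly the intended argument. The paper itself states this claim as a ``simple fact'' and supplies no proof, so your write-up is, if anything, more complete than the original.
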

In Step 1, Algorithm \ref{alg:ws_pr} starts with $f_{\textsf{init}}$ with excess $\eta$, hence the resulting pre-flow also has at most $\eta$ excess, and adding this pre-flow without restoring it to a max-flow may increase the excess by $\eta$. In Step 2, the initial flow in $G'$ also has total excess of at most $\sum_{u \in T_0} \textsf{exc}_{\widehat{f}}(u) \leq \eta$, so at the end of Algorithm \ref{alg:separate-exc-def} the total excesses also increases by this much. In Step 3, correspondingly the maximum increase is $\sum_{u \in S_0} \textsf{def}_{\widehat{f}}(u) \leq \eta$. To sum up, early termination in the auxiliary networks after finding the min-cut increases the total error by $O(\eta)$, and therefore has the same run-time bound up to a constant factor.

\section{More on Experiments}
\label{sec:exp_app}

\subsection{More on graph construction}

We take as input an image on pixel set $V$, and two sets of \emph{seeds} $\mathcal{O}, \mathcal{B} \subseteq V$. 
The seed set $\mathcal{O}$ contains pixels that are known to be part of the object, while the seed set $\mathcal{B}$ contains pixels that are known to be part of the background.
The \emph{intensity} or gray scale of pixel $v$ is denoted by $I_v$.
We say that two pixels are neighbors if they are either in the same column and in adjacent rows 
or same row and adjacent columns.
Intuitively, if neighboring pixels have very different intensities, we might expect one to be part of the object and one to be part of the background.
For any two pixels $p,q \in V$, a solution that separates them, i.e., puts one pixel in the object and the other one in the background, incurs a \emph{penalty} of $\beta_{p, q}$.
For neighbors $p$ and $q$, $\beta_{p,q}=C \exp (-(I_p - I_q)^2/(2\sigma^2))$, for $C$ a large constant, otherwise the penalty is $0$. Note that the quantity $\beta_{p,q}$ gets bigger when neighbors $p$ and $q$ have stronger contrast. 

A segmentation solution seeded with $\mathcal{O}$ and $\mathcal{B}$ labels each pixel as either being part of the object or part of the background, and the labeling must be consistent with the seed sets.
Let $J$ denote the object pixels for a fixed segmentation solution.
Then the \emph{boundary-based} objective function is the sum of all of the  penalties
$\max_{J}  \sum_{p \in J, q \notin J}\beta_{p,q},$ for $J$ with $\mathcal{O} \subseteq J, \mathcal{B} \subseteq V \setminus J $.
As in the definition, a positive penalty cost is only incurred on the object's boundary. The goal is to minimize the total penalty, 
which is in turn
maximizing the contrast between the object and background, for the given object and background seed sets.

Solving this maximization problem is
equivalent to solving the max-flow/min-cut problem on the following network. 
There is a node for each pixel, plus the object terminal $s$ and the background terminal $t$. 
As notation suggests, $s$ is the source of the network and $t$ is the sink.
The edge set on the nodes is as follows:
(1) for every $v \in \mathcal{O}$ add edge $(s,v)$ with capacity $M$, for $M$ a huge enough value that it is never saturated in any optimal cut;
(2) for every $u \in \mathcal{B}$ add edge $(u,t)$, again with capacity $M$;
(3) for every pair of nodes $p,q \in V$, add edges $(p, q)$ and $(q, p)$ with capacity $\beta_{p, q}$.
If an image is on $n \times n$ pixels, 
note that the graph is sparse with $|V|=O(n^2)$ nodes and $|E|=O(n^2)$ edges. 

In our experiments, all $\beta_{p, q}$'s are rounded down to the nearest integer, so that capacities are integral.
Since $\beta_{p, q} \leq C$ by definition, it suffices for us to let $M=C|V|^2$.

\subsection{Omitted tables and figures for experiments}

Table \ref{table:data_desc} contains a detailed description of each of the four image groups, their original size in the raw dataset, the cropped grey-scaled image size, the foreground/background they feature, etc.

% \vspace{-0.2in}
\begin{table}[ht]
\centering
\caption{Image groups' descriptions (copy of Table 1 from \cite{davies23b})}
% \small
\label{table:data_desc}
\begin{tabular}{r|llll}
\hline
    Image Group & Object, background & Original size & Cropped size \\ 
\hline
    \textsc{Birdhouse} & wood birdhouse, backyard & 1280, 720 & 600, 600 \\
    \textsc{Head} & a person's head, buildings & 1280, 720 & 600, 600 \\
    \textsc{Shoe} & a shoe, floor and other toys & 1280, 720 & 600, 600 \\
    \textsc{Dog} & Bernese Mountain dog, lawn & 1920, 1080 & 500, 500\\
\hline
\end{tabular}
% \vspace{-0.2in}
\end{table}

Figure \ref{fig:original_images} gives one example of raw images from each image group.

\begin{figure}[ht]
        \centering
        \begin{subfigure}[b]{0.25\linewidth}
            \centering
            \includegraphics[width=\linewidth]{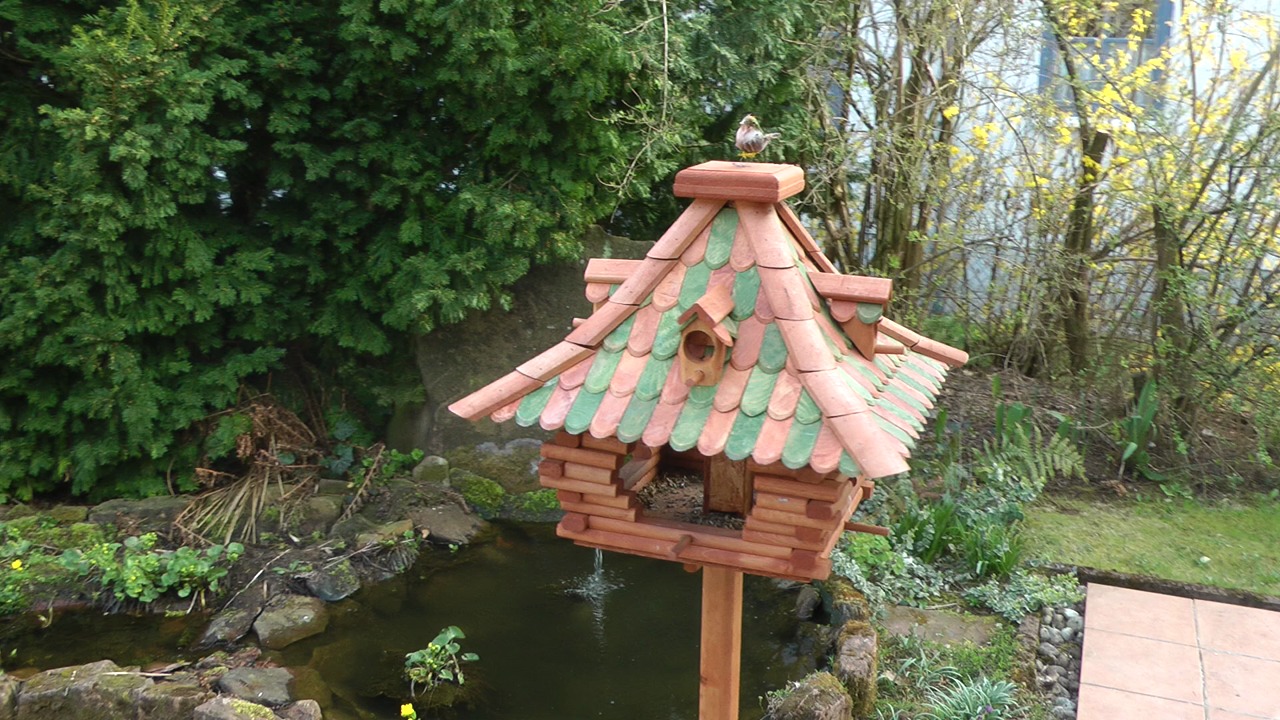}
            \caption[]%
            {{\small Birdhouse}}    
            \label{fig:birdhouse}
        \end{subfigure}
        \hspace{1cm}
        \begin{subfigure}[b]{0.25\linewidth}  
            \centering 
            \includegraphics[width=\linewidth]{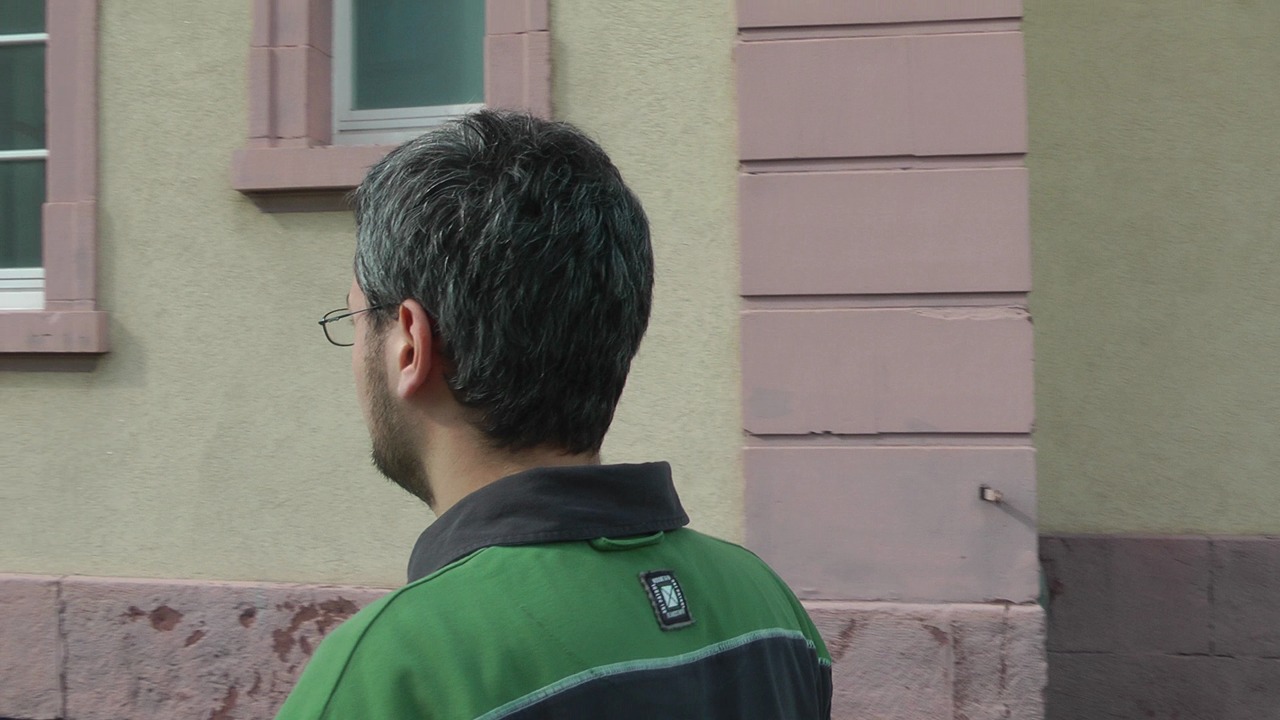}
            \caption[]%
            {{\small Head}}    
            \label{fig:head}
        \end{subfigure}
        \vskip\baselineskip
        \begin{subfigure}[b]{0.25\linewidth}   
            \centering 
            \includegraphics[width=\linewidth]{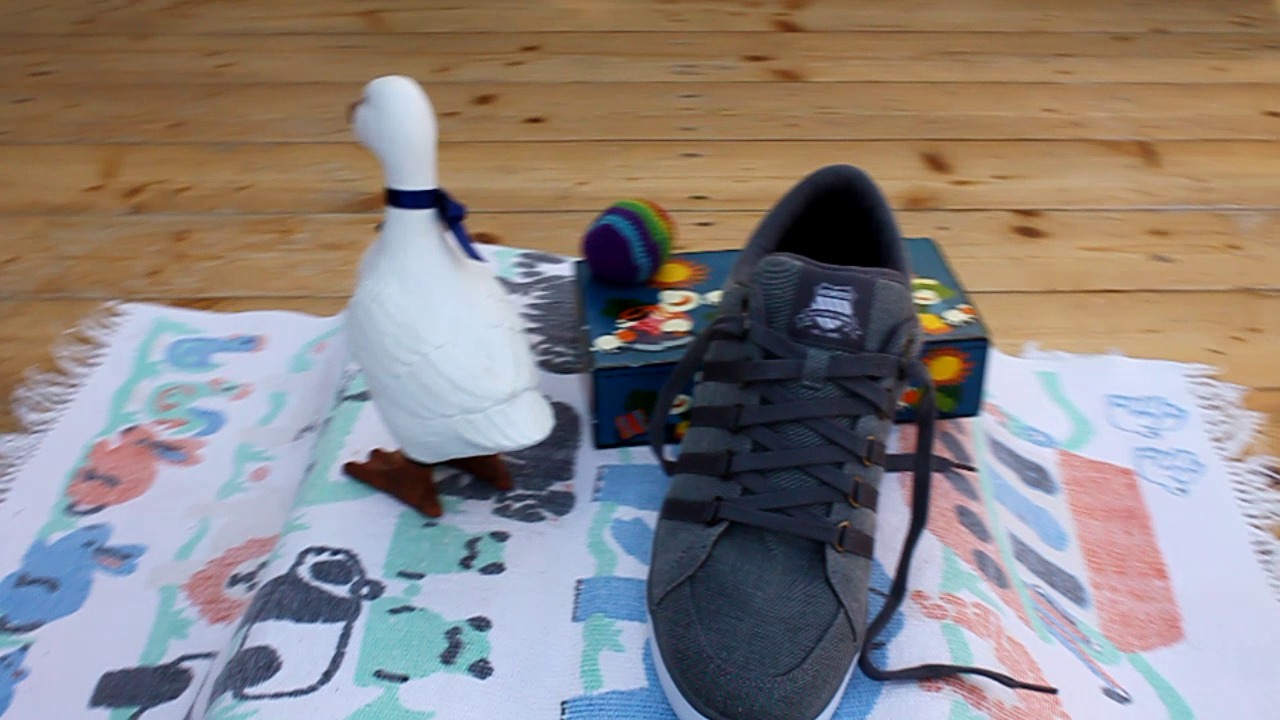}
            \caption[]%
            {{\small Shoe}}    
            \label{fig:shoe}
        \end{subfigure}
        \hspace{1 cm}
        \begin{subfigure}[b]{0.25\linewidth}   
            \centering 
            \includegraphics[width=\linewidth]{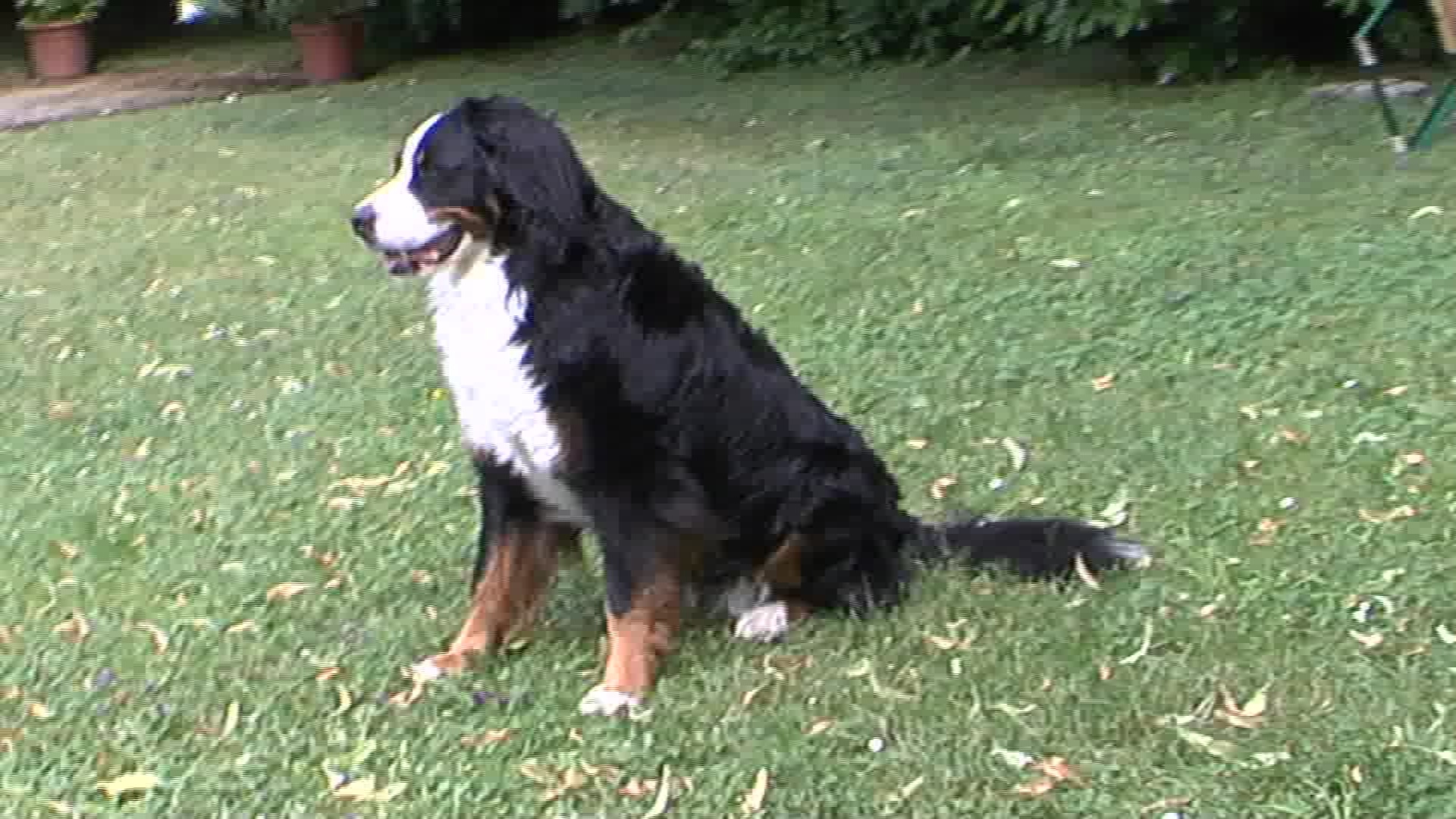}
            \caption[]%
            {{\small Dog}}    
            \label{fig:dog}
        \end{subfigure}
        \caption{\small Instances of images from each group (copy of Figure 1 from \cite{davies23b}).} 
        \label{fig:original_images}
% \vspace{-0.1in}
\end{figure}

In the main body, Figure \ref{fig:bh_cuts} shows examples of cuts found in some images from the \textsc{Birdhouse} image sequence. Figure \ref{fig:dog_cuts}, \ref{fig:shoe_cuts}, \ref{fig:head_cuts} show example cuts from the other image groups. 

\begin{figure}
        \centering
        \begin{subfigure}[b]{0.2\linewidth}
            \centering
            \includegraphics[width=\linewidth]{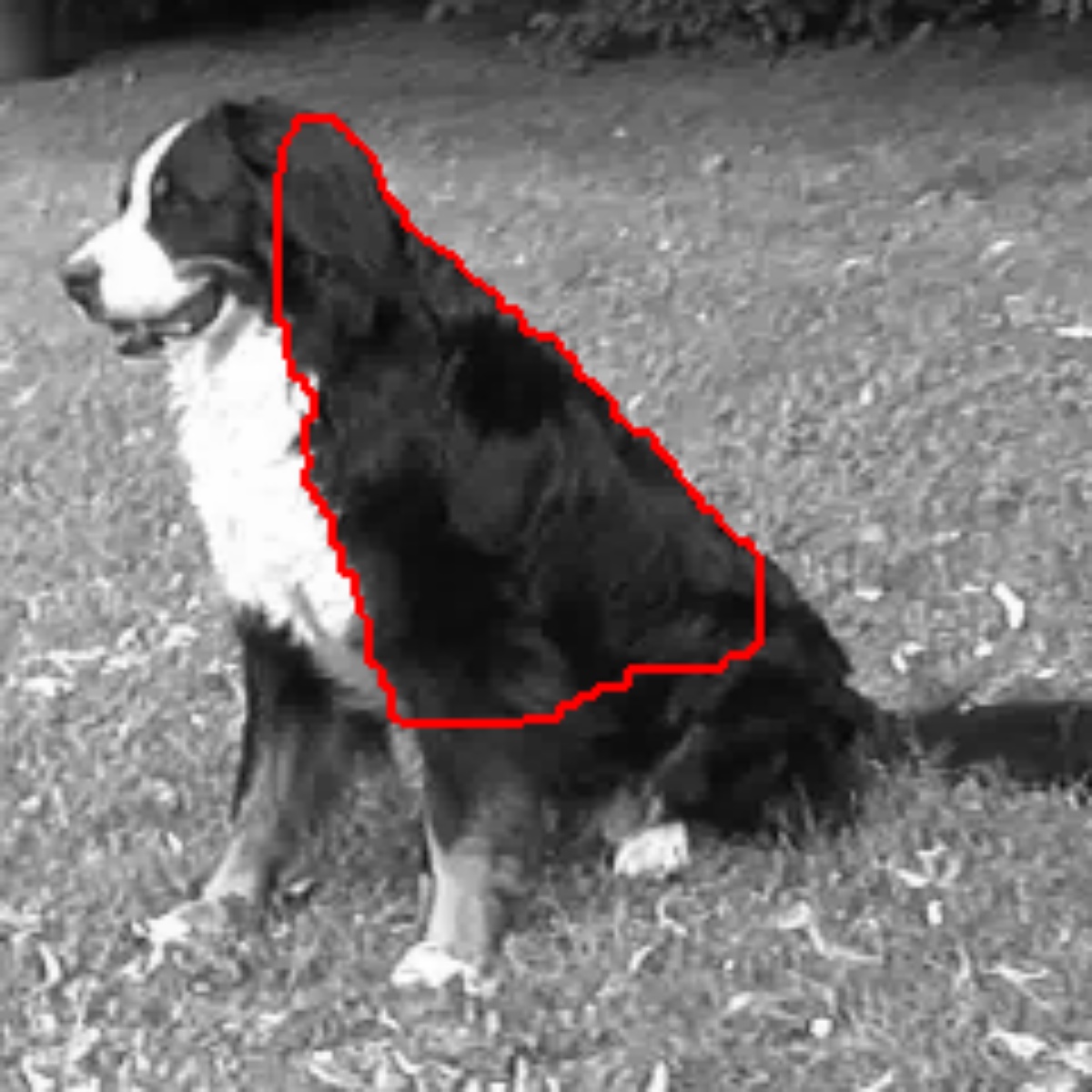}  
            \label{fig:dogcut1}
        \end{subfigure}
        \hspace{1cm}
        \begin{subfigure}[b]{0.2\linewidth}  
            \centering 
            \includegraphics[width=\linewidth]{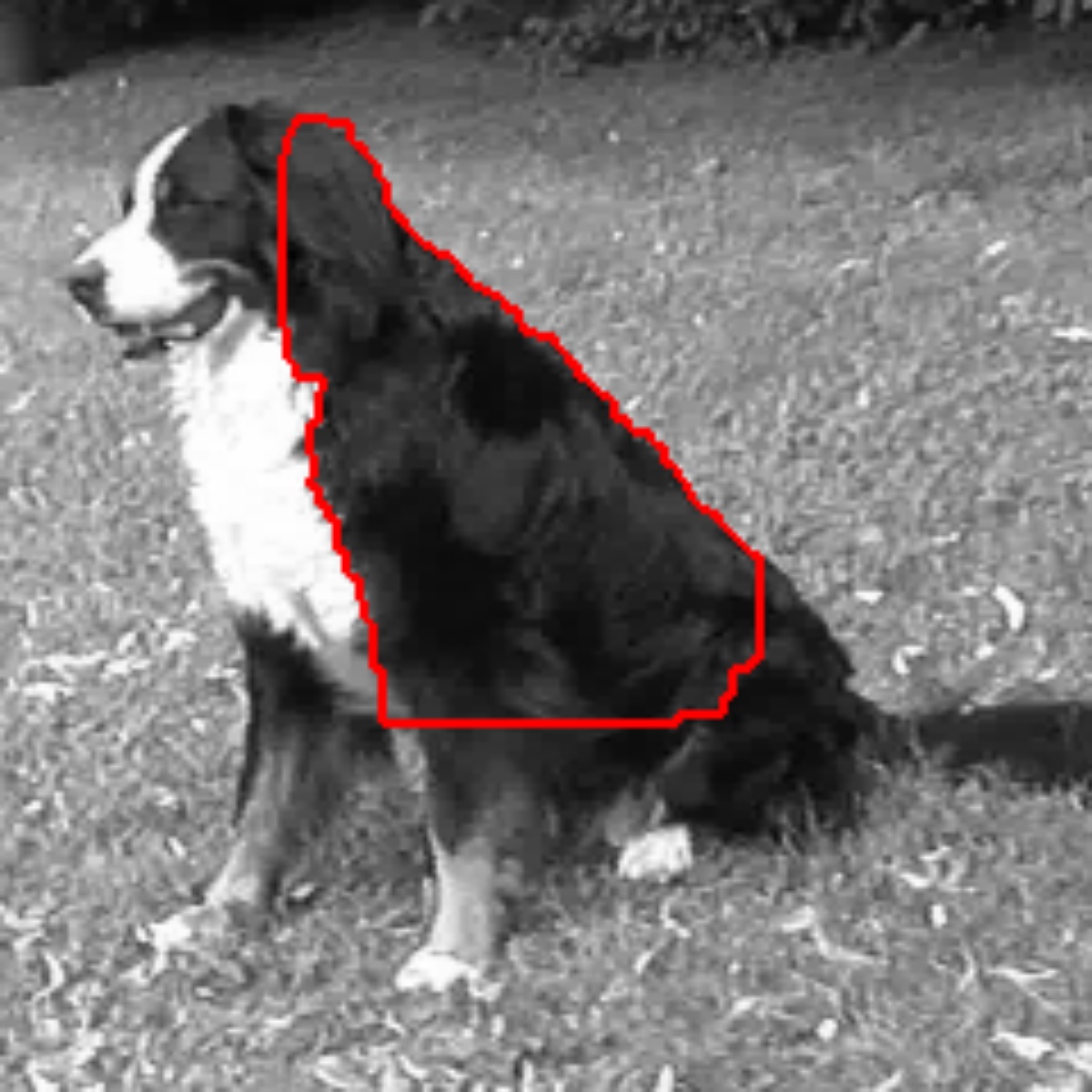}   
            \label{fig:dogcut2}
        \end{subfigure}
       \hspace{1cm}
            \begin{subfigure}[b]{0.2\linewidth}  
            \centering 
            \includegraphics[width=\linewidth]{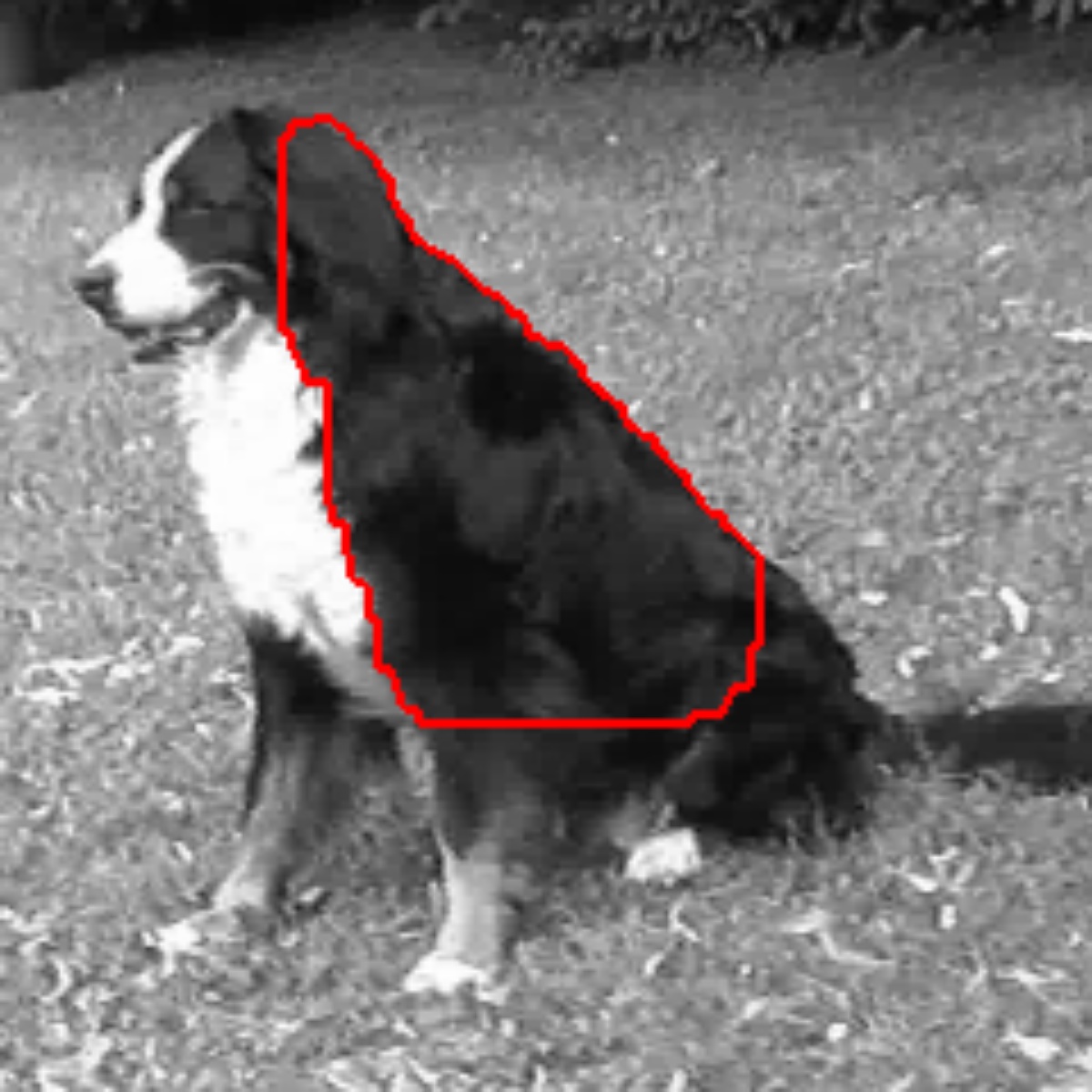}   
            \label{fig:dogcut3}
        \end{subfigure}
        \caption{\small  Cuts (red) on images chronologically evolving from the $240 \times 240$ pixel images from \textsc{Dog}.} 
        \label{fig:dog_cuts}
% \vspace{-0.2in}
\end{figure}

\begin{figure}
        \centering
        \begin{subfigure}[b]{0.2\linewidth}
            \centering
            \includegraphics[width=\linewidth]{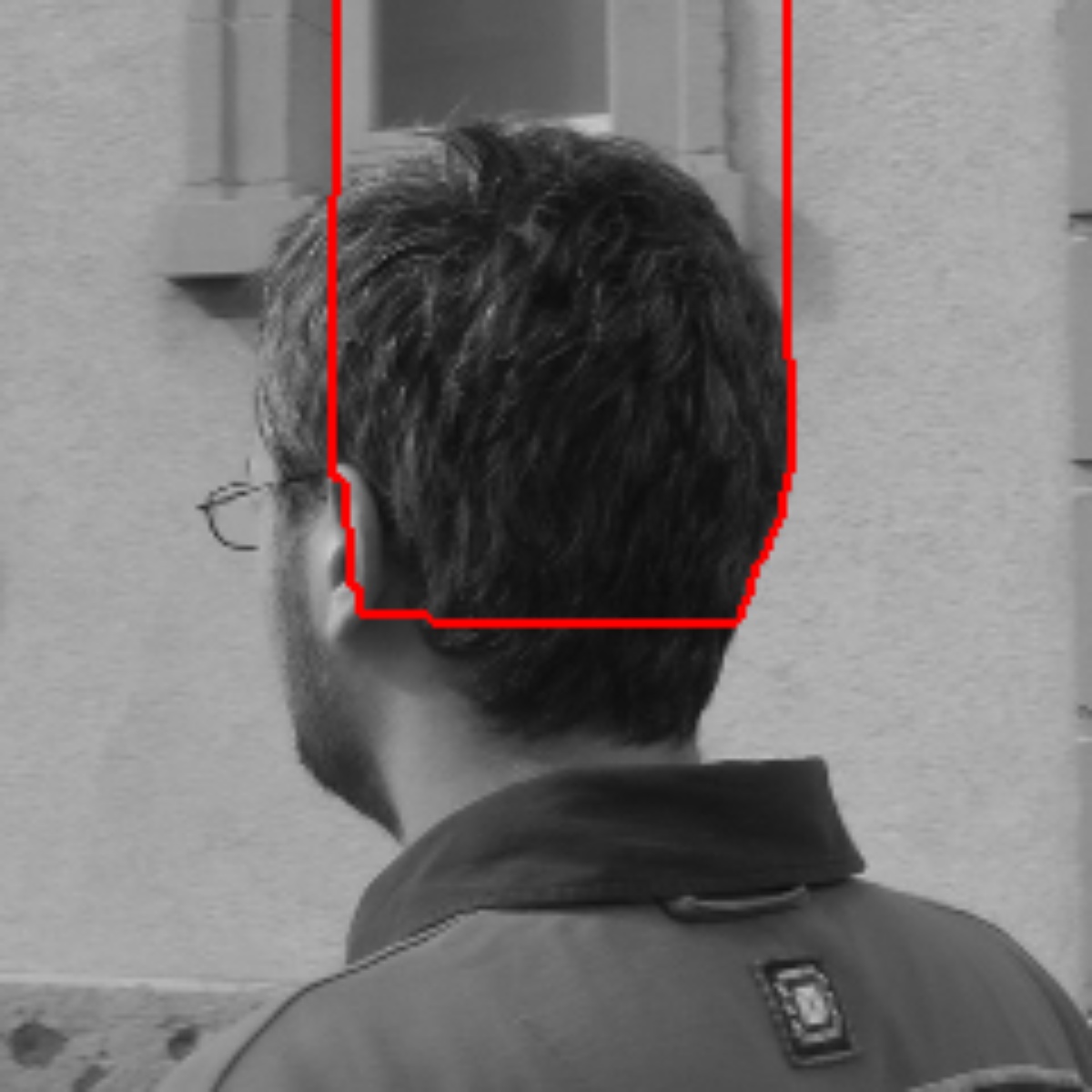}  
            \label{fig:headcut1}
        \end{subfigure}
        \hspace{1cm}
        \begin{subfigure}[b]{0.2\linewidth}  
            \centering 
            \includegraphics[width=\linewidth]{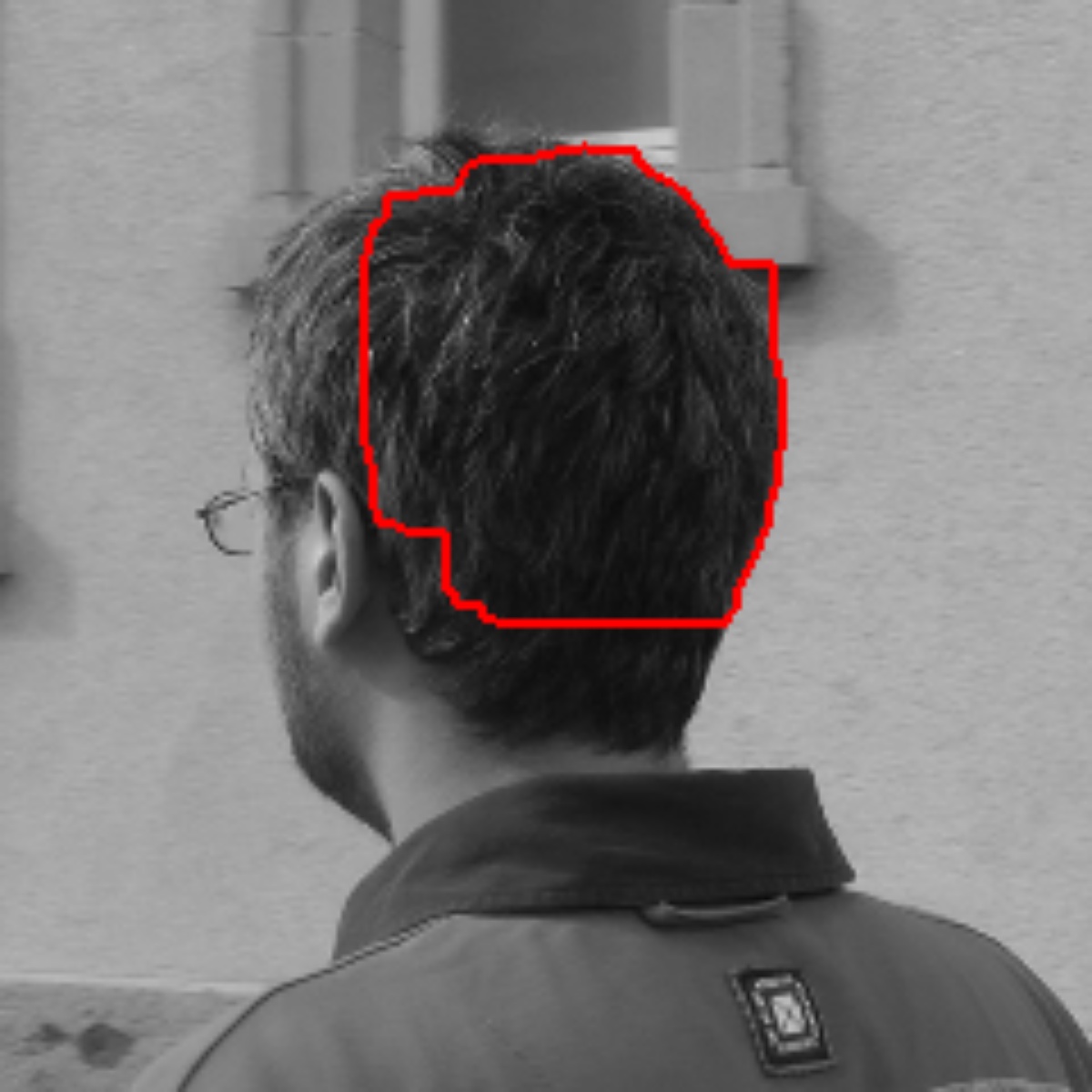}   
            \label{fig:headcut2}
        \end{subfigure}
       \hspace{1cm}
            \begin{subfigure}[b]{0.2\linewidth}  
            \centering 
            \includegraphics[width=\linewidth]{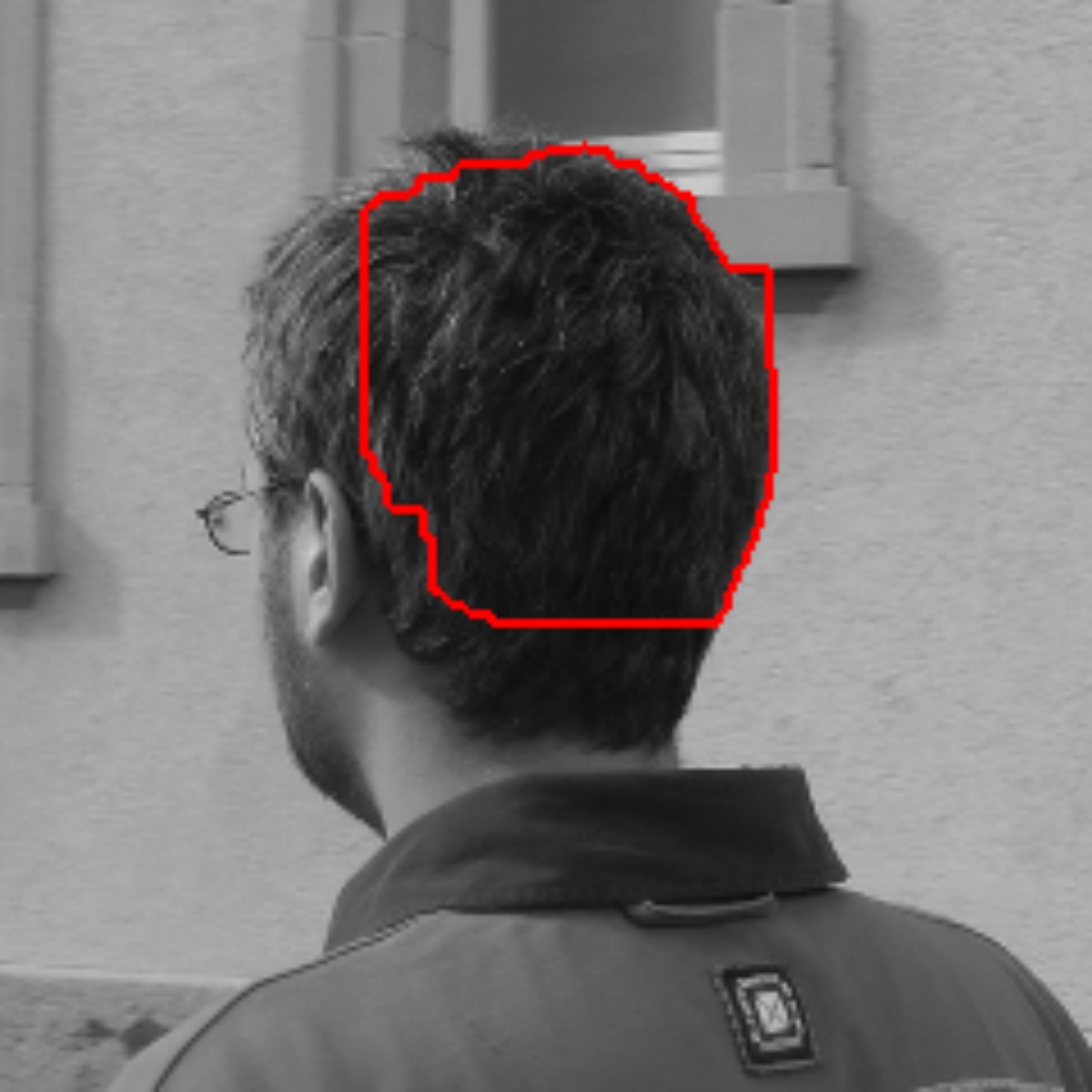}   
            \label{fig:headcut3}
        \end{subfigure}
        \caption{\small  Cuts (red) on images chronologically evolving from the $240 \times 240$ pixel images from \textsc{Head}.} 
        \label{fig:head_cuts}
% \vspace{-0.2in}
\end{figure}

\begin{figure}
        \centering
        \begin{subfigure}[b]{0.2\linewidth}
            \centering
            \includegraphics[width=\linewidth]{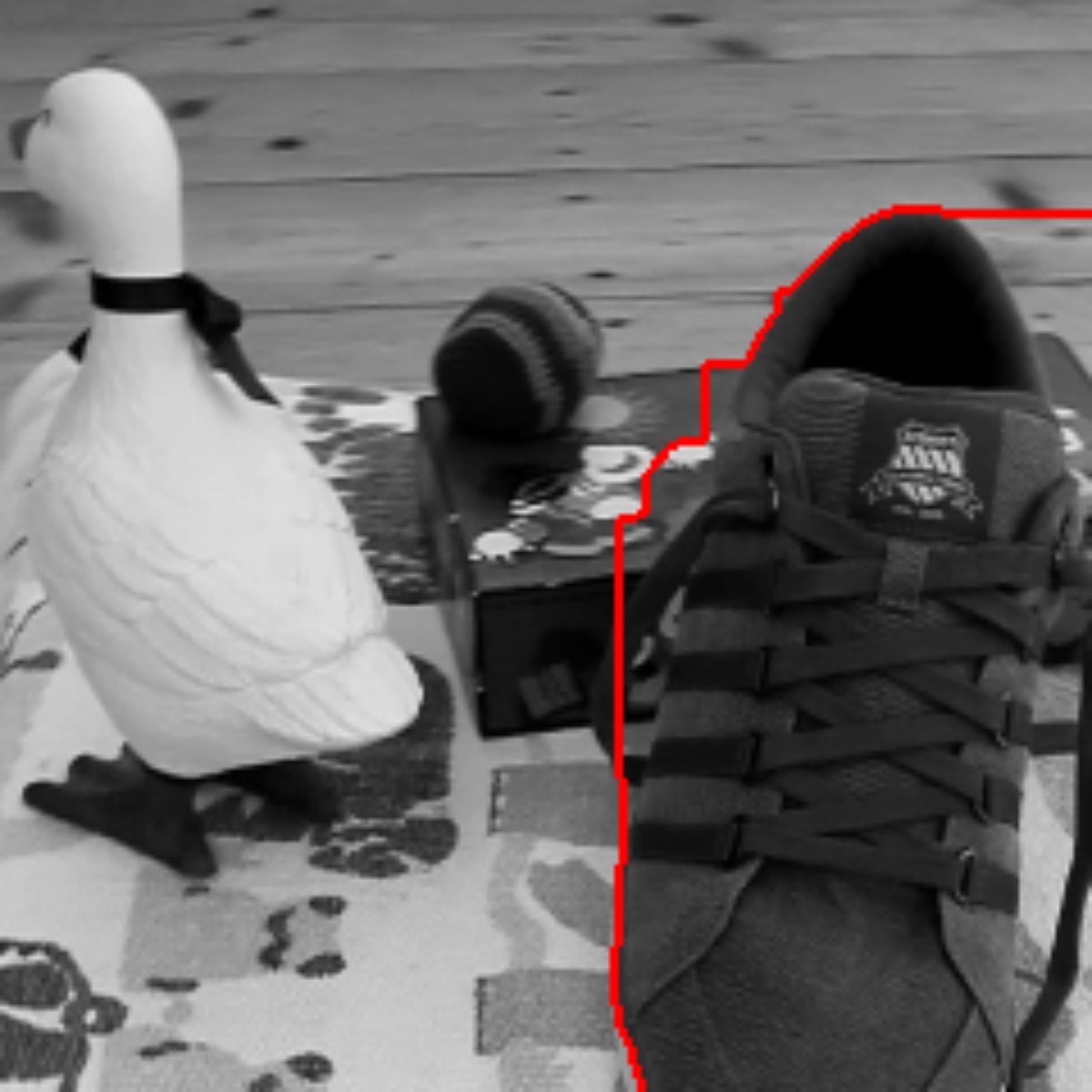}  
            \label{fig:shoecut1}
        \end{subfigure}
        \hspace{1cm}
        \begin{subfigure}[b]{0.2\linewidth}  
            \centering 
            \includegraphics[width=\linewidth]{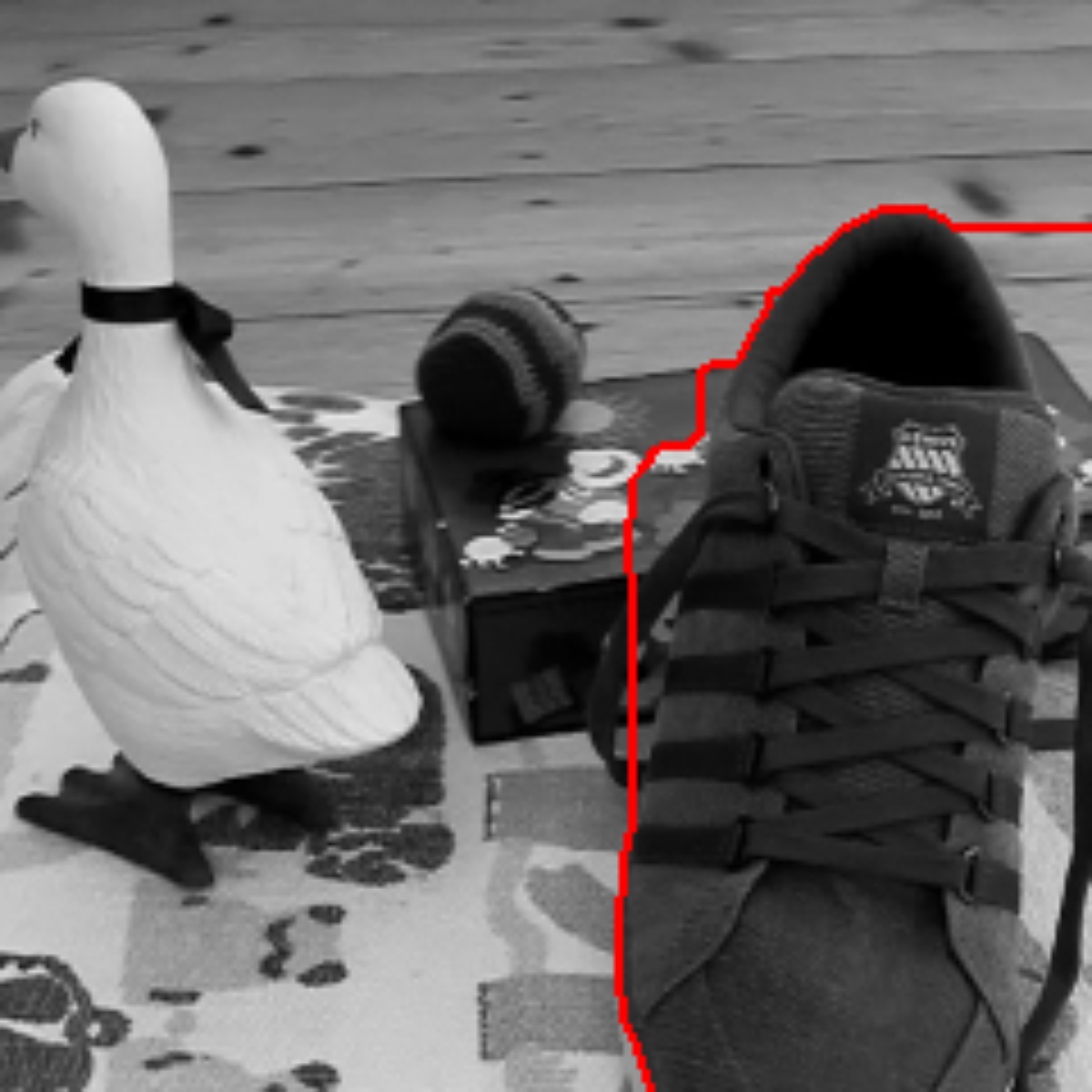}   
            \label{fig:shoecut2}
        \end{subfigure}
       \hspace{1cm}
            \begin{subfigure}[b]{0.2\linewidth}  
            \centering 
            \includegraphics[width=\linewidth]{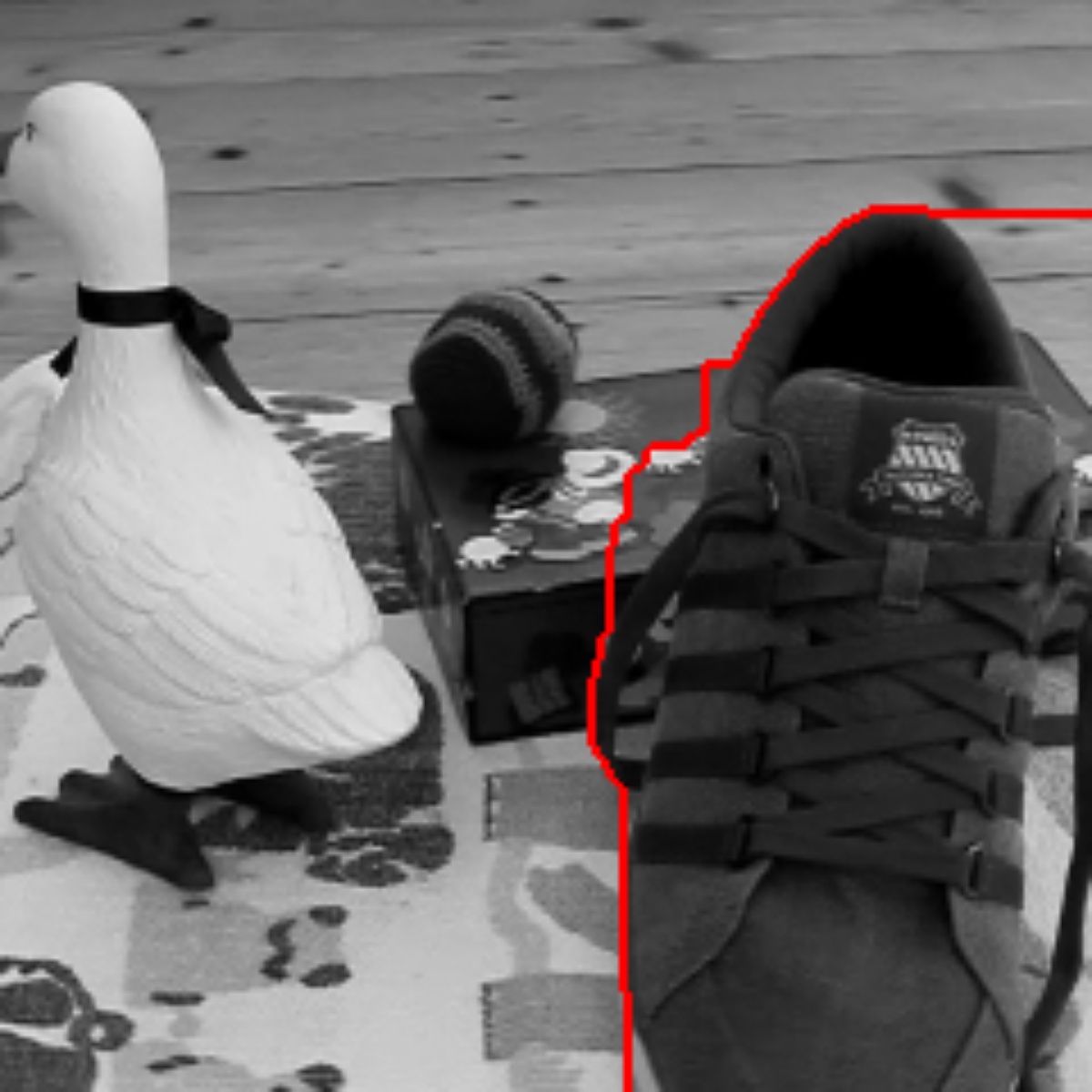}   
            \label{fig:shoecut3}
        \end{subfigure}
        \caption{\small  Cuts (red) on images chronologically evolving from the $240 \times 240$ pixel images from \textsc{Dog}.} 
        \label{fig:shoe_cuts}
% \vspace{-0.2in}
\end{figure}

\subsection{Full running time results}
\begin{table}[ht]
\centering
\caption{Average run-times (s) of cold-/warm-start Ford Fulkerson (FF) and Push-Relabel (PR)}
\label{table:running_time_ff_vs_pr_full}
\begin{tabular}{r|lllll}
\hline
    Image Group & FF cold-start & FF warm-start  &  PR cold-start & PR warm-start  \\ 
\hline
\textsc{Birdhouse} $30 \times 30$ & 0.80 & 0.51 & 0.05 & 0.06\\
\textsc{Head} $30 \times 30$ & 0.62 & 0.43 & 0.05 & 0.05 \\
\textsc{Shoe} $30 \times 30$ & 0.65 & 0.39 & 0.07 & 0.06 \\
\textsc{Dog} $30 \times 30$ & 0.69 & 0.32 & 0.10 & 0.11 \\
\hline
\hline
\textsc{Birdhouse} $60 \times 60$ & 8.22 & 3.25 & 0.30 & 0.45\\
\textsc{Head} $60 \times 60$ & 9.36 & 4.10 & 0.50 & 0.50 \\
\textsc{Shoe} $60 \times 60$ & 8.09 & 3.04 & 0.69 & 0.47 \\
\textsc{Dog} $60 \times 60$ & 21.91 & 6.73 & 0.76 & 0.95 \\
\hline
\hline
\textsc{Birdhouse} $120\times120$ & 109.06  &  37.31 &  5.42 & 4.98 \\
    \textsc{Head} $120\times120$& 101.79 & 28.43  & 5.90 & 5.92 \\
    \textsc{Shoe} $120\times120$& 98.95 & 30.44  & 6.44 & 3.74\\
    \textsc{Dog} $120\times120$ & 190.36 & 38.08   & 6.76 & 6.38 \\ 
\hline
\hline
\textsc{Birdhouse} $240 \times 240$ & NA & 400.19 & 60.67 & 55.68 \\
\textsc{Head} $240 \times 240$ & NA & 374.79 & 32.46 & 31.00\\
\textsc{Shoe} $240 \times 240$ & NA & 338.05 & 69.29 &  35.57 \\
\textsc{Dog} $240 \times 240$ & NA & 459.48 & 73.76 & 52.42\\
\hline
\hline
\textsc{Birdhouse} $480\times480$ & NA  &  NA &  604.54 & 502.58 \\
\textsc{Head} $480\times480$ & NA  &  NA &  365.25 & 285.75 \\
\textsc{Shoe} $480\times480$ & NA  &  NA &  756.77 & 364.42 \\
\textsc{Dog} $480\times480$ & NA  &  NA &  834.63 & 363.41 \\
\hline
\end{tabular}
\end{table}

 Table \ref{table:running_time_ff_vs_pr_full} compares the running time of cold-/warm-start Ford-Fulkerson as implemented in \cite{davies23b} against Push-Relabel on all data sizes and all image groups. The experiments were performed with the same computing configuration environment. One can see Push-Relabel greatly outperforms on the same image size, allowing us to collect run-time statistics on images of sizes up to $480 \times 480$ pixels, which we could not do with implementations of Ford-Fulkerson, due to its slow run-time.

\end{document}